\theoremstyle{plain}
\newtheorem{theorem}{Theorem}[section]
\newtheorem{lemma}[theorem]{Lemma}
\newtheorem{corollary}[theorem]{Corollary}
\newtheorem*{thm:eps}{Theorem~\ref{thm:eps}}
\newtheorem*{thm:tail1}{Theorem~\ref{thm:tail1}}
\newtheorem*{thm:stream}{Theorem~\ref{thm:stream}}
\theoremstyle{definition}
\newtheorem{definition}[theorem]{Definition}
\newtheorem*{remark}{Remark}
\newcommand{\BO}[1]{{ O}\left(#1\right)}
\newcommand{\BT}[1]{{\Theta}\left(#1\right)}
\newcommand{\BOM}[1]{\Omega\left(#1\right)}
\newcommand{\re}[1]{\textnormal{Re}\left(\displaystyle #1 \right)}
\newcommand{\n}[1]{\left\lVert #1 \right\rVert}
\newcommand{\E}[1]{\mathbb{E}\left[\displaystyle #1 \right]}
\newcommand{\pr}[1]{\textnormal{Pr}\left[\displaystyle #1 \right]}
\renewcommand{\epsilon}{\varepsilon}
\newcommand{\norm}[1]{\left\lVert#1\right\rVert}
\def\longv{0} %
\icmltitlerunning{Dimensionality Reduction on Complex Spaces with Dynamic Weights}
\begin{document}

\twocolumn[
\icmltitle{Dimensionality Reduction on Complex Vector Spaces \\ for Euclidean Distance with Dynamic Weights}

\icmlsetsymbol{equal}{*}

\begin{icmlauthorlist}
\icmlauthor{Simone Moretti}{pd,equal}
\icmlauthor{Paolo Pellizzoni}{mpi,pd,equal}
\icmlauthor{Francesco Silvestri}{pd}
\end{icmlauthorlist}

\icmlaffiliation{mpi}{Max Planck Institute of Biochemistry, Germany}
\icmlaffiliation{pd}{University of Padova, Italy}

\icmlcorrespondingauthor{Francesco Silvestri}{francesco.silvestri@unipd.it}

\icmlkeywords{Machine Learning, ICML, dimensionality reduction}

\vskip 0.3in
]

\printAffiliationsAndNotice{
\icmlEqualContribution
} %

\begin{abstract}
The weighted Euclidean norm $\n{x}_w$ of a vector $x\in \mathbb{R}^d$ with weights  $w\in \mathbb{R}^d$ is the Euclidean norm where the contribution of each dimension is scaled by a given weight. 
Approaches to dimensionality reduction that satisfy the Johnson–Lindenstrauss (JL) lemma can be easily adapted to the weighted Euclidean distance if weights are known and fixed: it suffices to scale each dimension of the input vectors according to the weights, and then apply any standard approach.
However, this is not the case when weights are unknown during the dimensionality reduction or might dynamically change.
In this paper, we address this issue by providing a linear function that maps vectors into a smaller complex vector space and allows to retrieve a JL-like estimate for the weighted Euclidean distance once weights are revealed.
Our results are based on the decomposition of the complex dimensionality reduction into several Rademacher chaos random variables, which are studied using novel concentration inequalities for sums of independent Rademacher chaoses.
\end{abstract}

\section{Introduction}
The {weighted Euclidean distance} between two vectors is a Euclidean distance where the contribution of each dimension is scaled by a given weight representing the relevance of the dimension.
Specifically, given a vector $x \in \mathbb{R}^d$ and a weight vector $w \in \mathbb{R}^d$ (with $w_i\geq 0$ for all $i\in \{1,\ldots d\}$), the \emph{weighted norm} of $x$ with weights $w$ is defined as $\norm{x}_w = \sqrt{\sum_{i=1}^d w_i^2 x_i^2}$. The \emph{weighted Euclidean distance} between two vectors $x,y$ is then defined as $\norm{x-y}_w$.

The weighted Euclidean distance is a frequently used primitive in several learning tasks for improving output quality, as weights can capture the estimated relevance of a feature on the output  \cite{nino21}. 
In the context of LLMs, attention scores are used to weight the relative importance of each component in a sequence relative to the other elements in that sequence \cite{V+17}, while weighted low-rank approximation is used to compress models by taking into account the importance of parameters \cite{hsu2022language,WY24}.
In recommendation systems, weights can improve the accuracy of prediction since it can
enhance the role of relevant products while reducing the impacts of irrelevant 
products~\cite{WangWY15,GuZHS16,YuXEK03}.
Text $k$-NN classifiers use weights for taking
into account the distribution of the terms across classes estimated from training data \cite{MES20,BhattacharyaGC17}. 
Most algorithmic methods supporting these applications assume that weights are known and fixed. 
However, weights might change over time or be not known at preprocessing time, a setting that we hereinafter refer to as “dynamic weights”. Examples include nearest neighbor methods \cite{Indyk98} for classification or recommender systems \cite{Baumgartner22}, where the relative importance of features might depend on the type of query at hand.
Weights might be estimated on highly dynamic datasets that require a frequent update of the parameters (e.g., \cite{B+24}). Moreover, weights might be user dependent and it might be too expensive to construct indexes for each possible set of weights, as in the aforementioned case of recommendation systems. 
For this reason, recent works have been addressing dynamic settings, such as for weighted Euclidean distance for the near neighbor search problem  \cite{LeiHKT19,HuL21}.

The goal of this paper is to investigate the impact of dynamically weighted distance in dimensionality reduction techniques.
Dimensionality reduction methods map high-dimensional vectors into a space with lower dimensionality, while keeping some information on the original vectors; the low-dimension vectors reduce communication and storage costs and mitigate the curse of dimensionality in the running time.
A well-known result on dimensionality reduction is provided by the \emph{Johnson–Lindenstrauss (JL) lemma}~\cite{JL84}: given a set $\mathcal{X}$ of $n$ vectors in $\mathbb{R}^d$, there exists a linear map $g(\cdot)$ that approximately maintains pairwise squared Euclidean distances; that is, for any $x,y\in \mathcal{X}$, we have that $|\norm{x-y}_2^2 - \norm{g(x)-g(y)}_2^2|\leq \epsilon \norm{x-y}_2^2$.

The JL lemma holds even with weighted Euclidean distance if weights are known beforehand.
Consider one of the several linear maps $g(x)$  satisfying the JL lemma, for instance $g(x)=Ax$ where $A$ is the Achlioptas' matrix~\cite{Achlioptas03} (i.e., each element is a random value in $\{-1,1\}$).
If all input vectors in $\mathcal{X}$ are scaled according to the given weights $w$ as $x'=w \odot x$, with $\odot$ denoting component-wise multiplication, the map $g(x')$ gives  $|\norm{x-y}_w^2 - \norm{g(x')-g(y')}_2^2|\leq \epsilon \norm{x-y}_w^2$, where $x'$ is the scaled version of the input vector $x\in \mathcal{X}$. 
However, this method cannot be applied if weights are known only after the dimensionality reduction step, or if weights could dynamically change. 
Note that naively storing the original vectors and scaling them once weights are provided would defeat the purpose of using dimensionality reduction. Therefore, one would need to find a dimensionality reduction technique that can reduce vectors obliviously to the weights, and then retrieve (an estimate of) the weighted norm once the weights are available based solely on the reduced vectors.

In fact, we observe that there is a simple, but fundamentally flawed, solution to this problem.
Indeed, it is enough to observe that the weighted norm is the scalar product between the vector $(x\odot x)$ and $(w\odot w)$. 
Then, applying the JL dimensionality reduction to the vector $(x\odot x)$ yields the reduced
vector $g'(x)=A (x\odot x)$, where $A$ is a $n \times k$ random matrix with entries in $\{-1,1\}$. Then recalling the results of JL for scalar products, e.g. as shown in \cite{Kaban},  we have that $g'(x) \cdot g'(w)$ is an estimator of $\norm{x}_w^2$ that can be computed from the reduced vector $g'(x)$ once the weights are revealed.  
In particular taking $k = O(\epsilon^{-2}\ln(\delta^{-1}))$ yields, with probability at least $1-\delta$, 
\begin{equation} \label{eq:kaban}
    | g'(x) \cdot g'(w) - \norm{x}_w^2 | < \epsilon \norm{x}_4^2 \norm{w}_4^2.
\end{equation}
However, the function $g'(\cdot)$ is \textit{not linear} and it cannot be used for estimating pairwise weighted distances in a set of vectors, severely limiting the applicability of the method. 

Indeed, the prime application of the linearity of JL maps is reducing the time complexity of pairwise distance computations \cite{Cunningham15}, which has applications in clustering \cite{Makarychev} and nearest neighbor search \cite{Indyk98,Ghalib20}, e.g. for recommender systems and classification. Other examples include compressed sensing \cite{Upadhyay15}, which relies on sparse linear measurements to recover an unknown signal. Finally, the linearity of JL has been exploited to reduce the computational complexity of least square regression \cite{Yang2015} and low-rank matrix approximation \cite{Ghojogh21,Cunningham15}.

Dimensionality reduction under Euclidean distance with dynamic weights can be formalized by the following problem: 1) each vector $x \in \mathcal{X}$ is mapped into a smaller vector with a \emph{linear} function $g(\cdot)$, obliviously of weights; 2) when weights $w$ are provided, a function $\rho(g(x), w)$ is applied to a compressed vector $g(x)$ to obtain an unbiased estimate of $\norm{x}_w^2$.
To solve the problem we have to provide function $g(x)$ to map vectors into the smaller space, and function $\rho(g(x), w)$ to recover the weighted norm.
To the best of our knowledge, no previous works have addressed this problem.

\subsection{Our results}
In this paper, we provide the first solution to dimensionality reduction with dynamic weights.
We map vectors in $\mathcal{X}$ into a low dimensional \emph{complex} vector space with a \emph{linear} map $g(x): \mathbb{R}^d\rightarrow \mathbb{C}^k$, for a suitable value $k>0$. The map is done before the weights are revealed and it is hence independent of weights.
More specifically, we let $g(x)= A x/\sqrt{k}$ where matrix $A$ is a complex $k\times d$ random matrix, where each entry is an independently and identically distributed random variable over $\{+1, -1, +i, -i\}$, where $i$ is the imaginary unit (i.e., $i^2 = -1$).
Once the weight vector $w$ is known, we apply a suitable function to the low-dimensional vectors in $\{g(x): x\in \mathcal{X}\}$ that allows us to estimate the squared weighted Euclidean norms and the pairwise distances for any pair of vectors in $\mathcal{X}$.
Let $h(g(x), w) = g(x) \odot g(x) \odot ((A \odot A) (w \odot w ))$ with $\odot$ denoting the element-wise multiplication.
Then, our unbiased estimate of $\norm{x}_w^2$ is given by 
$\rho(g(x), w) = {\rm Re}\big( \sum_{i=1}^k h(g(x), w)_i \big)$, where the subscript $\cdot_i$ denotes the $i$-th entry of a vector, and $\re{\cdot}$ is the real part.
Note that computing $\rho(\cdot, \cdot)$ does not require access to the original vector $x$.

In the paper, we will show that the expected value of $\rho(g(x),w)$ is $\norm{x}_w^2$ and provide upper bounds on the error. 
The following theorem, which is the main result of our paper, states that 
$\rho(g(x),w)$ provides a bounded additive error to the weighted norm, akin to the ones obtained for dot products under random projections \cite{Kaban}.
Moreover, when the vector and weights at hand are such that the quantity $\norm{x}^2_2\norm{w}^2_4/\norm{x}^2_w$ is bounded {by some term $\Delta$}, one can obtain a multiplicative approximation guarantee as in the JL lemma.

\begin{theorem} \label{thm:eps}
    Let $\epsilon, \delta > 0$. Let $\Delta$ be a suitable parameter and 
	$
	k \geq \BOM{ \max\left\{  \frac{\Delta^2 \ln(8/\delta)}{\epsilon^2}, \   \frac{\Delta \ln(8/\delta)^2}{\epsilon} \right\}}.
	$
    Then there exists a \emph{linear} function $g(x): \mathbb{R}^d\rightarrow \mathbb{C}^k$ and an estimator $\rho(g(x), w): \mathbb{C}^k \times \mathbb{R}^d \rightarrow \mathbb{R}$ such that for any given $x, w \in \mathbb{R}^d$, with probability at least $1-\delta$,
    \[
     | \rho(g(x), w) -  \norm{x}_w^2 | <  \epsilon  \norm{w}_4^2 \norm{x}_2^2/ \Delta.
    \]
     In particular, if $\norm{x}^2_2\norm{w}^2_4/\norm{x}^2_w \leq \Delta$, we get, with probability at least $1-\delta$,
    \[
     | \rho(g(x), w) -  \norm{x}_w^2 | <  \epsilon \norm{x}_w^2.
    \]
\end{theorem}

We observe that the linearity of $g(\cdot)$ allows us to estimate pairwise weighted Euclidean distances using the compressed vectors, as 
$\rho(g(x)-g(y), w)=\rho(g(x-y), w)$ is a unbiased estimator of $\norm{x-y}_w^2$. 

We then extend, in Section~\ref{sec:block}, the previous result by showing that
when the vectors $x$ and $w$ are near-uniformly distributed, we can lower the dimensionality of the reduced vectors to $\Theta\left( \max \big\{ \left( \frac{\Delta}{\epsilon} \right) \ln{(8/\delta)}^{\nicefrac{1}{2}},  \left( \frac{\Delta}{\epsilon} \right)^{\nicefrac{2}{3}} \ln{(8/\delta)}^{\nicefrac{4}{3}}  \big\} \right)$, and still obtain the same guarantees. The approach uses a sparse block matrix consisting of $L$ submatrices of size $k \times d/L$, where each submatrix is generated as the aforementioned complex matrix $A$, for a suitable value of $L$.

The analysis of the aforementioned results leverages the decomposition of the complex dimensionality reduction into several Rademacher chaos random variables, i.e., sums of products of Rademacher random variables (see Section~\ref{sec:rade_chaous_def} for a formal definition).
To do so, we develop a novel upper bound to tail probabilities for sums of independent Rademacher chaoses, which can be of independent interest. 

Specifically, we have the following concentration result, which builds on a novel application of Bonami's hypercontractive inequality \cite{Blei2004} to arbitrary Rademacher chaoses,  and a recent concentration result for sums of independent sub-Weibull random variables  \cite{Kuchibhotla2022}.

\begin{theorem} \label{thm:tail1}
    Let $S$ and $\{ S_i \}_{i=1,\dots,k}$ be $k+1$ independent and identically distributed Rademacher chaoses of order $\gamma'$.  Let $W(S)$ be the Euclidean norm of the coefficients in $S$. 
Then $\forall t>0, \forall \gamma \geq 3$ such that $\gamma'\leq \gamma$, there exist constants $c_1(\gamma), c_2(\gamma)$ depending only on $\gamma$ such that
	\[
	\pr{\left| \sum_{i=1}^k S_i \right| > W(S) \left( c_1(\gamma) \sqrt{k} \sqrt{t} + c_2(\gamma) t^{\nicefrac{\gamma}{2}} \right)  } \leq 2 e^{-t}.
	\]
\end{theorem}

\subsection{Related work}

Recently, \cite{LeiHKT19,HuL21} have studied the $r$-near neighbor search problem under weighted distance with weights provided at query time: given a set $S$ of $n$ vectors, construct a data structure that, for any query vector $x$ and weight vector $w$, returns a vector $y$ in $S$ such that $\norm{x-y}_w \leq r$. 
The papers leverage an asymmetric Locality Sensitive Hashing approach that uses a dimensionality expansion (from $\mathbb{R}^d$ to $\mathbb{R}^{2d}$) built with trigonometric functions.
The result in \cite{LeiHKT19} focuses on the weighted Euclidean distance, while the one in \cite{HuL21} on the weighted Manhattan distance.

In \cite{HS17}, the authors have proposed a variant of the nearest neighbor search problem allowing for some coordinates of the dataset to be arbitrarily corrupted or unknown.
The goal is to preprocess the input set such that, given a query point $q$, it is possible to efficiently find a point $x\in S$ such that the distance of the query $q$ to the point $x$ is minimized when removing the noisiest $k$ coordinates (i.e.,  coordinates are chosen so that deleting them from both vectors minimizes their distance). 
The paper provides a $(1+\epsilon)$-approximation to the optimal solution.
We observe that in this work the coordinates to be removed are selected by the maximization procedure, and not specified in input.

Complex numbers have been used for approximately computing the permanent of a matrix \cite{CRS03}, and in data streams to approximately count the number of occurrences of a given subgraph in a graph \cite{KMSS12}. 
A use of complex numbers similar to our case can be found in \cite{Kane11}, where they estimate the $p$ norm in a streaming setting, with $0<p<2$.

The JL lemma has been proved to provide tight bounds \cite{LarsenN16, Larsen17}, and has been used for several applications, including differential privacy \cite{Blocki} and clustering \cite{Makarychev}. 

Finally, \citet{Kaski25} very recently addressed the problem of developing a generalization of JL for multi-linear dot products, which encompasses weighted norms. Their approach relies on non-independent matrices rather than complex numbers to achieve unbiasedness and only considers the case of general vectors. On the other hand,  
since we focus on the weighted norm, 
we develop an enhanced technique for near-uniform vectors, which yields better results.

\subsection{Outline of the paper}

In Section~\ref{sec:prelim}, we provide a formal background on the concepts we use throughout the paper.
In Section~\ref{sec:concentration}, we derive the first tail bounds on sums of independent Rademacher chaoses. Our proof starts by generalizing the results of Bonami's hypercontractive inequality \cite{Blei2004} to arbitrary Rademacher chaoses, which is then used to obtain an expression for the Orlicz norm of Rademacher chaoses.  Using recent results based on such Orlicz norms \cite{Kuchibhotla2022}, we then prove the concentration result.
In Section~\ref{sec:dim-red}, we prove our main result  (Theorem~\ref{thm:eps}) and an improvement (Theorem~\ref{thm:eps_extra}) under the hypothesis of near-uniform vectors.  Our proof is achieved by decomposing the estimator $\rho(g(x), w)$ into Rademacher chaoses and applying our novel concentration bounds. %

All the formal proofs of lemmas and theorems, if not present in the main paper, are reported in the appendix.
\section{Preliminaries} \label{sec:prelim}
We now provide an overview of the concepts and techniques used to prove our results.

\subsection{Basic concepts}
Given a value $p>0$, the \emph{$p$-norm} of a vector $x \in \mathbb{R}^d$ is defined as $\n{x}_p = \left(\sum_{i=1}^{d} |x_i|^p\right)^{1/p}$.  
Given a \emph{weight vector} $w$ in $\mathbb{R}_+^{d}$, the \emph{weighted $p$-norm} of $x$ is $\n{x}_{w,p} =  \left(\sum_{i=1}^{d} w_i^p |x_i|^p\right)^{1/p}$.
When $p=2$, we get the Euclidean norm and we simply denote the weighted norms of $x$ as $\n{x}_w$.

We state some commonly known inequalities that will be used in the subsequent sections.
\begin{lemma}[Cauchy-Schwarz ineq.] \label{lem:cs}
    For any $x, y \in \mathbb{R}^d$, 
    $
    \left( \sum_{i=1}^d x_i y_i \right)^2 \leq \left( \sum_{i=1}^d x_i^2 \right)\left( \sum_{i=1}^d y_i^2 \right) = \n{x}^2_2\n{y}^2_2.
    $
\end{lemma}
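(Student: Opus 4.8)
The plan is to derive the inequality from the non-negativity of a suitable quadratic in one real variable, which is the classical discriminant argument. First I would fix the two vectors $x, y \in \mathbb{R}^d$ and define the real function
\[
f(t) = \sum_{i=1}^d (t x_i - y_i)^2,
\]
observing immediately that $f(t) \geq 0$ for every $t \in \mathbb{R}$, since it is a sum of squares of real numbers.

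Next I would expand the square and collect powers of $t$ to rewrite $f$ as a quadratic polynomial,
\[
f(t) = \left(\sum_{i=1}^d x_i^2\right) t^2 - 2\left(\sum_{i=1}^d x_i y_i\right) t + \sum_{i=1}^d y_i^2 = \n{x}_2^2\, t^2 - 2\langle x, y\rangle\, t + \n{y}_2^2,
\]
where I abbreviate $\langle x, y\rangle = \sum_{i=1}^d x_i y_i$. The key step is then to invoke the elementary fact that a quadratic $at^2 + bt + c$ with leading coefficient $a > 0$ that remains non-negative for all real $t$ must have non-positive discriminant, $b^2 - 4ac \leq 0$. Applying this with $a = \n{x}_2^2$, $b = -2\langle x,y\rangle$, and $c = \n{y}_2^2$ gives $4\langle x,y\rangle^2 - 4\n{x}_2^2\n{y}_2^2 \leq 0$, which rearranges exactly to the claimed bound; the final equality $\left(\sum_i x_i^2\right)\left(\sum_i y_i^2\right) = \n{x}_2^2\n{y}_2^2$ is immediate from the definition of the Euclidean norm.

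The argument is entirely elementary, so there is no genuinely hard step; the single place that needs care is the degenerate case $x = 0$, where the leading coefficient vanishes and the discriminant criterion does not apply. In that case, however, both $\langle x, y\rangle$ and $\n{x}_2$ are zero, so the inequality holds trivially with both sides equal to $0$, and symmetrically if $y = 0$. I would therefore dispatch the degenerate case separately and apply the discriminant argument only when $\n{x}_2 > 0$.

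As an alternative that sidesteps the degenerate-case split entirely, I could instead verify the Lagrange identity $\n{x}_2^2\n{y}_2^2 - \langle x, y\rangle^2 = \tfrac{1}{2}\sum_{i,j}(x_i y_j - x_j y_i)^2$, whose right-hand side is manifestly non-negative; this yields the inequality in one line but at the cost of a slightly longer algebraic expansion. I would favor the discriminant route for its brevity.
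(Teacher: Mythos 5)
Your proof is correct, but there is nothing in the paper to compare it against: the paper states this lemma as one of several ``commonly known inequalities'' and gives no proof at all, using it as a black box in the proofs of Lemma~\ref{lem:lp} and Lemma~\ref{lem:l2}. Taken on its own terms, your discriminant argument is the classical one and is carried out carefully: you correctly note that the criterion ``non-negative quadratic implies non-positive discriminant'' requires a strictly positive leading coefficient, and you dispatch the degenerate case $x=0$ separately, where both sides of the inequality vanish. The Lagrange-identity alternative you sketch, $\n{x}_2^2\n{y}_2^2 - \left(\sum_{i=1}^d x_i y_i\right)^2 = \tfrac{1}{2}\sum_{i,j}\left(x_i y_j - x_j y_i\right)^2$, is equally valid and has the minor advantage of needing no case split, at the cost of a longer expansion. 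Either version would serve as a self-contained proof if the paper wanted one; as written, the paper simply defers to the standard literature.
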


\begin{lemma}[Monotonicity of $p$-norm] \label{lem:lp}
    For any $x \in \mathbb{R}^d$ and any $0 < p < q < +\infty$, we have
    $
    \norm{x}_q \leq \norm{x}_p.
    $
\end{lemma}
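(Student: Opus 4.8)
The plan is to exploit the positive homogeneity of both norms to reduce to a normalized vector, and then use the fact that on the unit interval higher powers are smaller. First I would dispose of the trivial case $x = 0$, where both sides vanish. For $x \neq 0$, since $\norm{ax}_p = a\norm{x}_p$ and $\norm{ax}_q = a\norm{x}_q$ for any $a > 0$, the claimed inequality is invariant under scaling of $x$; hence I may rescale and assume without loss of generality that $\norm{x}_p = 1$, reducing the goal to showing $\norm{x}_q \leq 1$.

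Next, from $\norm{x}_p = 1$ I get $\sum_{i=1}^d |x_i|^p = 1$, which forces each term to satisfy $|x_i|^p \leq 1$, and therefore $|x_i| \leq 1$ for every coordinate $i$. The key observation is that for a fixed real number $t \in [0,1]$ the map $s \mapsto t^s$ is nonincreasing; since $q > p$, this gives $|x_i|^q \leq |x_i|^p$ for each $i$. Summing over all coordinates yields $\sum_{i=1}^d |x_i|^q \leq \sum_{i=1}^d |x_i|^p = 1$, and raising to the power $1/q$ gives $\norm{x}_q \leq 1 = \norm{x}_p$, as desired.

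There is no substantial obstacle here: the only points requiring care are verifying that the normalization is legitimate (which follows immediately from homogeneity) and justifying the coordinatewise inequality $|x_i|^q \leq |x_i|^p$, which rests entirely on the entries being bounded by $1$ after normalization. An alternative route would recast the statement as a comparison of power means, or invoke the concavity of $t \mapsto t^{p/q}$ together with Jensen's inequality, but the normalization argument above is the most direct and avoids any appeal to convexity machinery.
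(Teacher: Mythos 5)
Your proof is correct and follows essentially the same route as the paper's: both arguments normalize the vector so that all coordinates lie in $[0,1]$ and then apply the coordinatewise inequality $t^q \leq t^p$ for $t \in [0,1]$, $p < q$. The only (immaterial) difference is that you rescale so that $\norm{x}_p = 1$, whereas the paper divides each coordinate by $\norm{x}_q$ and compares $\norm{x}_p^p$ with $\norm{x}_q^p$ directly.
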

\ifnum\longv=1
\begin{proof}
    If $x = 0$ the claim holds trivially. Else, let $y_i = |x_i|/\norm{x}_q \leq 1$.
    Then we have 
    \[
      \begin{split}
    \norm{x}_p^p &= \sum_{i=1}^d |x_i|^p = \sum_{i=1}^d |y_i|^p \norm{x}_q^p 
    \geq \sum_{i=1}^d |y_i|^q \norm{x}_q^p = \norm{x}_q^p \sum_{i=1}^d |x_i|^q / \norm{x}_q^q = \norm{x}_q^p
    \end{split}
    \]
\end{proof}
\fi 

\begin{lemma}\label{lem:l2}
For any $p, q\geq 1$ and any $x, y \in \mathbb{R}^d$, we have
$
 \sum_{i=1}^d |x_i|^p |y_i|^q  \leq \norm{x}_2^{p}\norm{y}_2^{q}.
$
\end{lemma}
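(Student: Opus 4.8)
The plan is to reduce the claim to two tools already available in the excerpt: the Cauchy--Schwarz inequality (Lemma~\ref{lem:cs}) and the monotonicity of $p$-norms (Lemma~\ref{lem:lp}). The key observation is that the left-hand side $\sum_{i=1}^d |x_i|^p |y_i|^p$ is an inner product of the two transformed vectors whose $i$-th coordinates are $|x_i|^p$ and $|y_i|^p$, so Cauchy--Schwarz applies directly to them rather than to $x$ and $y$ themselves.

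First I would apply Lemma~\ref{lem:cs} to the vectors $u,v \in \mathbb{R}^d$ defined by $u_i = |x_i|^p$ and $v_i = |y_i|^p$. This yields
\[
\left( \sum_{i=1}^d |x_i|^p |y_i|^p \right)^2 \leq \left( \sum_{i=1}^d |x_i|^{2p} \right)\left( \sum_{i=1}^d |y_i|^{2p} \right).
\]
Taking square roots and rewriting each factor in terms of a $2p$-norm, namely $\left(\sum_{i=1}^d |x_i|^{2p}\right)^{1/2} = \norm{x}_{2p}^{\,p}$, this becomes
\[
\sum_{i=1}^d |x_i|^p |y_i|^p \leq \norm{x}_{2p}^{\,p}\, \norm{y}_{2p}^{\,p}.
\]

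Next I would invoke the monotonicity of $p$-norms. Since $p \geq 1$ we have $2p \geq 2$, so Lemma~\ref{lem:lp} (applied with the pair of exponents $2 \leq 2p$) gives $\norm{x}_{2p} \leq \norm{x}_2$ and likewise $\norm{y}_{2p} \leq \norm{y}_2$; the boundary case $p=1$ is trivial since then $\norm{x}_{2p} = \norm{x}_2$. Raising these to the $p$-th power and substituting into the previous display produces $\norm{x}_{2p}^{\,p}\,\norm{y}_{2p}^{\,p} \leq \norm{x}_2^{\,p}\,\norm{y}_2^{\,p}$, which chains with the Cauchy--Schwarz bound to yield the claim.

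There is no real obstacle here: the only mild subtlety is recognizing that Cauchy--Schwarz should be applied to the coordinatewise powers $|x_i|^p$ rather than to $x$ directly, which automatically surfaces the $2p$-norm and makes the monotonicity step the natural closing move. Once that reformulation is made, both invoked lemmas apply verbatim and the argument is immediate.
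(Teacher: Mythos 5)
Your proof is correct and follows essentially the same route as the paper's: apply Cauchy--Schwarz (Lemma~\ref{lem:cs}) to the coordinatewise powers $|x_i|^p$, $|y_i|^p$, identify the resulting factors as $\norm{x}_{2p}^{p}\norm{y}_{2p}^{p}$, and finish with the monotonicity of $p$-norms (Lemma~\ref{lem:lp}). Your explicit handling of the boundary case $p=1$ (where Lemma~\ref{lem:lp} as stated requires strict inequality of exponents) is a small point of extra care that the paper glosses over.
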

\ifnum\longv=1
\begin{proof}
By applying Lemmas \ref{lem:cs} and \ref{lem:lp}, we get:
\[
\begin{split}
 \sum_{i=1}^d x_i^p y_i^q & \leq \left( \sum_{i=1}^d |x_i|^{2p} \right)^{1/2}\left( \sum_{i=1}^d |y_i|^{2q} \right)^{1/2}
 \leq 
\norm{x}_{2p}^{p}\norm{y}_{2q}^{q} \leq \norm{x}_2^{p}\norm{y}_2^{q}.
\end{split}
\]
\end{proof}
\fi

\subsection{Johnson-Lindenstrauss (JL) lemma}
Let $\mathcal X=\{x_1, \ldots x_{n}\}$ be a set of $n$ vectors in $d>0$ dimensions. 
For any $\epsilon>0$, the Johnson-Lindenstrauss lemma \cite{JL84} claims that, if $k=\BOM{\epsilon^{-2} \log n}$, there exists a linear function $g: \mathbb{R}^d\rightarrow \mathbb{R}^k$ that maintains pairwise distances up to a multiplicative error $\epsilon$ with high probability. That is, for any $x,y\in \mathcal X$, 
we have
$$
(1-\epsilon)\n{x-y}^2_2 \leq \n{g(x)- g(y)}^2_2 \leq (1+\epsilon)\n{x-y}^2_2
.$$ 
Equivalently, the JL lemma claims the existence of a linear map that approximately maintains Euclidean norms.
There are several constructions for $g(\cdot) $, for instance, those in \cite{Achlioptas03}. Let $g(x) = Ax$ for a suitable $k\cdot d$ matrix; if $k=\BT{\epsilon^{-2} \log n}$ and each entry of $A$ is an independent and even distributed  Rademacher random variable $\{-1,1\}$, then $A$ satisfies the JL lemma with high probability. 
The property holds even if an entry of $A$ is 0 with probability 2/3 and in $\{-1,1\}$ otherwise.

\subsection{Rademacher chaos} \label{sec:rade_chaous_def}
We introduce here some results on Rademacher chaos.
Let $I_{\gamma,n}\subseteq \{1,\ldots n\}^\gamma$, and let 
$I'_{\gamma,n}=\{(i_1, \ldots, i_\gamma): i_j\in \{1,\ldots n\}, i_j\neq i_k \text{ if } j\neq k \}$ be a set of ordered sequences of $\gamma$ distinct values from $\{1, \dots, n\}$; when $n$ is clear from the context, we drop the subscript $n$.

Let $X_1,\ldots X_n$ be $n$ independent Rademacher random variables.
Then, a \emph{Rademacher chaos} $S$ of order $\gamma$, with coefficients $a_{(i_1,\ldots i_\gamma)} \in \mathbb{R}$, is defined as:
$$
   S = \sum_{(i_1,\ldots i_\gamma)\in I'_{\gamma,n}} a_{(i_1,\ldots i_\gamma)} X_{i_1}\ldots X_{i_\gamma}
.$$
The \emph{$q$-norm} of the random variable $S$ is defined  as
$
\norm{S}_q =  \E{|S|^q}^{\nicefrac{1}{q}}.
$
We define a function $W$ on the set of all Rademacher chaoses as 
$
    W(S) = \sqrt{\sum_{i\in I'_{ \gamma,n}} a_{(i_1,\dots,i_\gamma)}^2},
$
i.e. the Euclidean norm of the coefficients.

Bonami's hypercontractive inequality \cite{Blei2004} provides a bound for $\norm{S}_q$ when the sum is taken over sequences $(i_1, \dots, i_\gamma)$ such that $i_1 < i_2 < \dots < i_\gamma$, possibly after a renaming of the indexes.
\begin{theorem}[\cite{Blei2004}]\label{thm:blei}
    Let $S$ be a Rademacher chaos of the form 
    $
    S = \sum_{1\leq i_1 < \dots < i_\gamma\leq n} a_{(i_1,\ldots i_\gamma)} X_{i_1}\ldots X_{i_\gamma}.
    $
    Then, $\norm{S}_q \leq (q-1)^{\gamma/2} \norm{S}_2 = (q-1)^{\gamma/2} W(S)$.
\end{theorem}
Note that in general, for Rademacher chaoses with sums over $I'_{\gamma,n}$, Bonami's hypercontractive inequality does \textit{not} hold. 
In Section~\ref{sec:bonami-gen} we show how to generalize the result to arbitrary Rademacher chaoses.

\subsection{Orlicz norms and sub-Weibull random variables}
We introduce some results on the concentration of sums of independent heavy-tailed distributions. 
We recall the definition of Orlicz norms for random variables \cite{Kuchibhotla2022}.
\begin{definition}
    Let $g: [0, \infty) \rightarrow [0, \infty)$ be a non-decreasing function with $g(0) = 0$. The \emph{g-Orlicz norm} of a real-valued random variable $X$ is 
    $
    \n{X}_g = \inf \left\{ \eta > 0 : \ \E{g(|X|/\eta)} \leq 1 \right\}.
    $
    \end{definition}

Note that, despite the name, if $g$ is not convex, the function $\n{\cdot}_g$ is not a norm. However, the norm properties are not needed in the following derivations.

Two well-known spacial cases of $g$ are $\psi_2 = \exp(x^2)-1$ and $\psi_1 = \exp(x)-1$. When a random variable has a finite $\psi_2$-Orlicz norm it is called sub-Gaussian and when it has a finite $\psi_1$-Orlicz norm it is called sub-exponential. 
The following definition \cite{Kuchibhotla2022} generalizes such random variables.

\begin{definition}
    A random variable $X$ is said to be sub-Weibull of order $\alpha > 0$ if
    $
    \n{X}_{\psi_\alpha} < \infty$  where $\psi_\alpha(x) = \exp(x^\alpha)-1.
    $
\end{definition}
The following theorem, proved in \cite{Kuchibhotla2022} (Theorem 3.1), states a concentration result for sums of independent sub-Weibull random variables. We report a bound for heavy-tailed ($\alpha < 1$) variables.
\begin{theorem}[\cite{Kuchibhotla2022}] \label{thm:subweibull}
    Let $X_1, \dots, X_k$ be i.i.d. mean zero random variables with $\n{X_i}_{\psi_\alpha} < \infty$ for some $\alpha \in (0, 1)$. Then $\forall t>0$
    \[ \small
    \pr{\left| \sum_{i=1}^k X_i \right| \geq C_\alpha \n{X_1}_{\psi_\alpha}  \!\!\big( \sqrt{k} \sqrt{t}  +  2^{\nicefrac{2}{\alpha} - \nicefrac{1}{2}} t^{\nicefrac{1}{\alpha}} \big)  } \! \leq 2 e^{-t}
    \]
    with $C_\alpha = 8\sqrt{2}e^{\nicefrac{97}{24}}(2\pi)^{\nicefrac{1}{4}} (e^{\nicefrac{2}{e}}/\alpha)^{\nicefrac{1}{\alpha}}$.
\end{theorem}

\section{Concentration of sums of Rademacher chaoses} \label{sec:concentration}

In this section, we provide the first concentration results for sums of independent Rademacher chaoses.

\subsection{Tail bounds and Orlicz norms of arbitrary Rademacher chaoses}\label{sec:bonami-gen}

We generalize the bound given by Bonami's inequality for arbitrary Rademacher chaoses.
Let $G_\gamma$ be the set of permutations over $\gamma$ elements, i.e. $G_\gamma = Sym(\{1,\dots,\gamma\})$.
Then, $\forall\sigma\in G_\gamma$, let $I^\sigma_{\gamma,n} = \{ (i_1, \ldots, i_\gamma): i_j\in \{1,\ldots n\}, i_{\sigma(j)} < i_{\sigma(k)} \text{ if } j < k \}$. 
Then we can express 
\[
\begin{split}
    S &= \sum_{(i_1,\ldots i_\gamma)\in I'_{\gamma,n}} a_{(i_1,\ldots i_\gamma)} X_{i_1}\ldots X_{i_\gamma}  \\
    &=\sum_{\sigma\in G_\gamma} \sum_{i\in I^\sigma_{\gamma,n}} a_{(i_1,\ldots i_\gamma)} X_{i_1}\ldots X_{i_\gamma}
    = \sum_{\sigma\in G_\gamma} S^\sigma
\end{split}
\]
Intuitively, we can decompose the chaos $S$ into a collection of chaoses $S^\sigma$: in each $S^\sigma$, tuples can be sorted using permutation $\sigma$ and then Bonami's inequality can be applied.
Indeed, Bonami's hypercontractive inequality \cite{Blei2004} implies that, $\forall \sigma \ \forall q \geq 2$,
\[
\begin{split}
\norm{S^\sigma}_q &\leq (q-1)^{\gamma/2} \norm{S^\sigma}_2 \\
&= (q-1)^{\gamma/2} \left( \sum_{(i_1,\ldots i_\gamma)\in I^\sigma_{\gamma,n}} a^2_{(i_1,\ldots i_\gamma)} \right)^{\nicefrac{1}{2}}.
\end{split}
\]

Using this decomposition, we obtain a bound on the $q$-norm of a general Rademacher chaos.
\begin{lemma}\label{lem:bonami-gen}
    Let $S = \sum_{\sigma\in G_\gamma} S^\sigma$ be a Rademacher chaos of order $\gamma \geq 2$. Then,
    $
        \n{S}_q \leq (q-1)^{\gamma/2}\sqrt{\gamma!}W(S)
    $
\end{lemma}
\ifnum\longv=1
\begin{proof}
    Let $\overline{S}^\sigma = W(S^\sigma)^2 = \n{S^\sigma}_2^2 \ \forall \sigma$ and $\overline{S} = W(S)^2 = \sum_{\sigma\in G_\gamma} \overline S^\sigma$ 
    We have that:

\begin{align*}
        0 &\leq \sum_{\sigma\in G_\gamma}\sum_{\rho\in G_\gamma} \left(\sqrt{\overline S^\sigma} - \sqrt{\overline S^\rho}\right)^2 =
         \sum_{\sigma\in G_\gamma}\sum_{\rho\in G_\gamma} \left(\overline S^\sigma+\overline S^\rho - 2\sqrt{\overline S^\sigma \overline S^\rho}\right) =
         2  \gamma! \sum_{\sigma\in G_\gamma} \overline S^\sigma  -  2\sum_{\sigma\in G_\gamma}\sum_{\rho\in G_\gamma} \sqrt{\overline S^\sigma \overline S^\rho} \leq \\
         &\leq 2\gamma!\sum_{\sigma\in G_\gamma}  \overline S^\sigma - 2 \left(\sum_{\sigma\in G_\gamma} \sqrt{\overline S^\sigma}\right)^2
         \end{align*}
    This implies that $ 
         \sqrt{\gamma!\overline S} \geq \sum_{\sigma\in G_\gamma} \sqrt{\overline S^\sigma}$ and thus $
         \sqrt{\gamma!}W(S)  \geq \sum_{\sigma\in G_\gamma} \n{S^\sigma}_2$.
    Then we can use the triangular inequality and Bonami's hypercontractive inequalities to prove the lemma: 
    \[
        \n{S}_q = \n{\sum_{\sigma\in G_\gamma}S^\sigma}_q \leq 
        \sum_{\sigma\in G_\gamma}\n{S^\sigma}_q \leq 
        (q-1)^{\gamma/2}\sum_{\sigma\in G_\gamma}\n{S^\sigma}_2 \leq (q-1)^{\gamma/2}\sqrt{\gamma!}W(S)
    \]
\end{proof}
\fi

We then have the following concentration result.
\begin{lemma} \label{lem:bonamitail}
    Let $S$ be a Rademacher chaos of order $\gamma \geq 2$. Then for any  $\gamma' \geq \gamma$ and $t>0$, we have:
    \[\pr{|S| > t} \leq e^2 \exp\left( - (e\sqrt{\gamma!}W(S))^{-\nicefrac{2}{\gamma'}} t^{\nicefrac{2}{\gamma'}} \right).
    \]
\end{lemma}
\begin{proof}[Proof sketch]
    We obtain the bound by applying Markov's inequality and Lemma~\ref{lem:bonami-gen} to obtain
    \[
    \pr{|S| > t} \leq \E{|S|^q} t^{-q} = \n{S}_q^q t^{-q},
    \]
    and set $q$ appropriately.
\end{proof}

Finally, we can exploit the previous concentration result to provide an upper bound on the $\psi_{\alpha}$-Orlicz norm for a Rademacher chaos. 

\begin{lemma} \label{lem:orlicz-rade}
	Let $S$ be a Rademacher chaos of order $\gamma \geq 2$.  Let $\alpha \leq 2/\gamma$ and $\psi_\alpha = \exp(x^\alpha)-1$.
	Then 
	\[
	\n{S}_{\psi_\alpha} \leq e (e^2+1)^{\nicefrac{1}{\alpha}}\sqrt{\gamma!}W(S).
	\]
\end{lemma}
\begin{proof}
	From the definition of Orlicz norm we have $\n{S}_{\psi_\alpha} = \inf \left\{ \eta > 0 : \ \E{\psi_\alpha(|S|/\eta)} \leq 1 \right\}.$
	Then, applying Lemma~\ref{lem:bonamitail} with $\gamma'=2/\alpha\geq \gamma$, we get
	\[ \small
	\begin{split}
	\E{\psi_\alpha(|S|/\eta)} &= \int_{0}^{\infty} \pr{\psi_\alpha(|S|/\eta) > x} dx \\
		&= \int_{0}^{\infty} \pr{|S| > \eta(\ln(x+1)^{\nicefrac{1}{\alpha}})} dx \\
		& \leq e^2 \!\int_{0}^{\infty} \!\!\!\exp\left( \!- \big( e \sqrt{\gamma!}W(S) \big)^{-\alpha} \eta^\alpha \ln(x\!+\!1)\right) dx \\
		&= e^2 \int_{0}^{\infty} (x+1)^{- \left( \frac{\eta}{e \sqrt{\gamma!}W(S)} \right)^\alpha} dx \\
		&= \frac{e^2}{ \left( \frac{\eta}{e \sqrt{\gamma!}W(S)} \right)^\alpha -1}.
	\end{split}
	\]

	Then,  if $\eta \geq e (e^2+1)^{\nicefrac{1}{\alpha}}\sqrt{\gamma!}W(S)$, we have that $\E{\psi(|S|/\eta)} \leq 1$.
	Therefore, we have that $\n{S}_{\psi_\alpha} \leq e (e^2+1)^{\nicefrac{1}{\alpha}}\sqrt{\gamma!}W(S)$.
\end{proof}

\subsection{Proof of Theorem~\ref{thm:tail1}} \label{sec:Th1}

Having obtained a bound on the Orlicz norm for a Rademacher chaos, we can now prove the first of our main results, a concentration inequality for sums of independent Rademacher chaoses. 
\begin{thm:tail1} 
    Let $S$ and $\{ S_i \}_{i=1,\dots,k}$ be $k+1$ independent and identically distributed Rademacher chaoses of order $\gamma'$. Then $\forall t>0$, $\forall \gamma \geq 3$ such that $\gamma'\leq \gamma$, there exist constants $c_1(\gamma), c_2(\gamma)$ depending only on $\gamma$ such that
	\[
	\pr{\left| \sum_{i=1}^k S_i \right| > W(S) \left( c_1(\gamma) \sqrt{k} \sqrt{t} + c_2(\gamma) t^{\nicefrac{\gamma}{2}} \right)  } \leq 2 e^{-t}.
	\]
\end{thm:tail1}
\begin{proof}
	By setting $\alpha = 2/\gamma \leq 2/\gamma'$ in Lemma~\ref{lem:orlicz-rade}, we get that $\n{S}_{\psi_{\nicefrac{2}{\gamma}}} \leq e (e^2+1)^{\nicefrac{\gamma}{2}}\sqrt{\gamma'!}W(S)\leq e (e^2+1)^{\nicefrac{\gamma}{2}}\sqrt{\gamma!}W(S)$. 
    Then, by defining $c_1(\gamma) =  e (e^2+1)^{\nicefrac{\gamma}{2}} \sqrt{\gamma!}C_{\nicefrac{2}{\gamma}}$ and $c_2(\gamma) = e (e^2+1)^{\nicefrac{\gamma}{2}} 2^{\gamma - \nicefrac{1}{2}} \sqrt{\gamma!} C_{\nicefrac{2}{\gamma}}$, we get:
	\[ \small
        \begin{split}
	&\pr{ \left| \sum_{i=1}^k S_i \right| > W(S) \left( c_1(\gamma) \sqrt{k} \sqrt{t} + c_2(\gamma) t^{\nicefrac{\gamma}{2}} \right)  } \\
    & \quad \leq \pr{\left| \sum_{i=1}^k S_i \right| > C_{\nicefrac{2}{\gamma}} \n{S}_{\psi(\nicefrac{2}{\gamma})} \left(\sqrt{k}  \sqrt{t} +  2^{\gamma - \nicefrac{1}{2}} t^{\nicefrac{\gamma}{2}}\right) }.
    \end{split}
	\]
    Since $\E{S_i} = 0$ for any Rademacher chaos and by applying Theorem~\ref{thm:subweibull} with $\alpha = 2/\gamma$, we have that the right term of the previous inequality can be upper bounded as:
	\[ \small
	\begin{split}
	\pr{\left| \sum_{i=1}^k S_i \right| > C_{\nicefrac{2}{\gamma}} \n{S}_{\psi(\nicefrac{2}{\gamma})} \left(\sqrt{k}  \sqrt{t} +  2^{\gamma - \nicefrac{1}{2}} t^{\nicefrac{\gamma}{2}}\right) } \leq 2 e^{-t}.
	\end{split}
	\]
 The theorem then follows.
\end{proof}

The above theorem provides the first tail bound for sums of independent Rademacher chaoses.  
The following corollary rephrases the tail bound by splitting it into a sub-Gaussian tail for small deviations and a heavy tail for large deviations. 

\begin{corollary} \label{cor:tail2}
	Let $\{ S_i \}_{i=1,\dots,k}$ be a sequence of $k$ independent and identically distributed Rademacher chaoses of order $\gamma'$.  Then $\forall \gamma \geq 3$ such that $\gamma' \leq \gamma$, there exist constants $c_1(\gamma), c_2(\gamma)$ depending only on $\gamma$ such that
	\[ \small
	\pr{\left| \sum_{i=1}^k S_i \right| >t} \leq \begin{cases}
		2 \exp\left( - \frac{1}{4c_1(\gamma)^2 W(S)^2} \frac{t^2}{k} \right) & \text{if $(\dagger)$}  \\
		2 \!\exp\Big( \!\!-\!\!\Big(\frac{t}{2 c_2(\gamma) W(S)}\!\Big)^{\nicefrac{2}{\gamma}}   \Big)\! & \text{otherwise }
	\end{cases}
	\]
        with $(\dagger)$ the event: $0 < t < 2 \left(\frac{c_1(\gamma)^\gamma}{c_2(\gamma)}\right)^{\frac{1}{\gamma-1}} W(S) k^{\frac{\gamma}{2\gamma - 2}}$
\end{corollary}
\ifnum\longv=1
\begin{proof}
	Let $\Delta = (c_1/c_2)^{\nicefrac{2}{\gamma-1}} k^{\nicefrac{1}{\gamma-1}}$ and let $\Delta' =  2 \left(\frac{c_1^\gamma}{c_2}\right)^{\frac{1}{\gamma-1}} W(S) k^{\frac{\gamma}{2\gamma - 2}}$. From Theorem~\ref{thm:tail1} we have
	\[
	\pr{\left| \sum_{i=1}^k S_i \right| > W(S) \left( c_1 \sqrt{k} \sqrt{u} + c_2 u^{\nicefrac{\gamma}{2}} \right)  } \leq 2 e^{-u}.
	\]

	For $u < \Delta$,  we have that $c_1 \sqrt{k} \sqrt{u} + c_2 u^{\nicefrac{\gamma}{2}} \leq 2 c_1 \sqrt{k} \sqrt{u}$. 
	Then,
	\[
	\small
	\begin{split}
	\pr{\left| \sum_{i=1}^k S_i \right| > 2 c_1 W(S) \sqrt{k} \sqrt{u}} &\leq \pr{\left| \sum_{i=1}^k S_i \right| > W(S) \left( c_1 \sqrt{k} \sqrt{u} + c_2 u^{\nicefrac{\gamma}{2}} \right)  } 
		\leq 2 e^{-u}.
	\end{split}
	\]
 By setting $u=t^2 / \left(4c_1^2 W(S)^2 k\right)$, we have $u< \Delta$ when $t< \Delta'$. Then the above inequality gives the first part of the inequality. 

	For $u \geq \Delta$,  we have that $c_1 \sqrt{k} \sqrt{u} + c_2 u^{\nicefrac{\gamma}{2}} \leq 2 c_2 u^{\nicefrac{\gamma}{2}}$. 
	Then,
	\[
	\small
	\begin{split}
	\pr{\left| \sum_{i=1}^k S_i \right| > 2 c_2 W(S) u^{\nicefrac{\gamma}{2} }} &\leq \pr{\left| \sum_{i=1}^k S_i \right| > W(S) \left( c_1 \sqrt{k} \sqrt{u} + c_2 u^{\nicefrac{\gamma}{2}} \right)  } 
		\leq 2 e^{-u}.
	\end{split}
	\]
 By setting $u=\left( t/(2c_2W(S))\right)^{2/\gamma}$, we have $u\geq \Delta$ when $t\geq \Delta'$. Then the above inequality gives the second part of the inequality.

\end{proof}
\fi

\section{Complex dimensionality reduction} \label{sec:dim-red}
In this section, we analyze the properties of  the dimensionality reduction onto the complex vector space described in the introduction.
We recall that $A \in \mathbb{C}^{k \times d}$ denotes a random matrix where each entry in an independent and identically distributed random variable in $\{1, -1, i, -i\}$, and we define $g(x) = Ax/\sqrt{k}$ for any $x \in \mathbb{R}^d$. 
Let $h(g(x), w) = g(x) \odot g(x) \odot ((A\odot A) (w\odot w))$ and $\rho(g(x), w) = {\rm Re} \Big( \sum_{i=1}^k h(g(x), w)_i \Big)$ be our estimate of $\norm{x}_w^2$, where $\odot$ represents the element-wise multiplication.
Although $\rho(g(x), w)$ is expressed as a function of $x$, it is computed from the reduced vector $g(x)$.

\subsection{Results in expectation}

We first show that the expected value of our estimator $\rho(g(x), w)$ corresponds to the weighted norm.
\begin{theorem}
        For any $x, w \in \mathbb{R}^d$, we have that $\mathbb{E}[\rho(g(x), w)] = \norm{x}^2_w$.
\end{theorem}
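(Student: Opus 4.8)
The plan is to write $\rho(x,w)$ explicitly in terms of the entries of $A$ and then reduce the whole computation to the low-order moments of a single complex random sign. Starting from the identity $\rho(x,w) =\re{ k\sum_{i=1}^k \left(g(x)_i g(w)_i\right)^2}$ stated in the text and substituting $g(x)_i = \frac{1}{\sqrt k}\sum_{j} A_{ij} x_j$, I would first obtain
\[
\rho(x,w) = \re{\frac{1}{k}\sum_{i=1}^k \left(\sum_{j} A_{ij} x_j\right)^2\left(\sum_{m} A_{im} w_m\right)^2}.
\]
Since both $\re{\cdot}$ and $\E{\cdot}$ are linear, and since the $k$ rows of $A$ are i.i.d., taking the expectation collapses the average over $i$ and reduces the claim to a single row. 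Writing $a=(a_1,\dots,a_d)$ for one row of i.i.d. entries uniform on $\{1,-1,i,-i\}$, it suffices to show that $\re{\E{\left(\sum_j a_j x_j\right)^2\left(\sum_m a_m w_m\right)^2}} = \norm{x}_w^2$.

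Next I would expand the product into the quadruple sum $\sum_{j,j',m,m'} a_j a_{j'} a_m a_{m'}\, x_j x_{j'} w_m w_{m'}$ and push the expectation inside. Because the entries are independent across coordinates, the expectation of each monomial factorizes over the distinct index values appearing among $\{j,j',m,m'\}$, so everything is governed by the single-entry moments, which a direct computation gives as $\E{a}=\E{a^2}=\E{a^3}=0$ and $\E{a^4}=1$. Grouping a monomial by the multiplicities of its distinct indices, the factor is nonzero only when every appearing index has multiplicity exactly $4$; since the total degree is $4$, this forces $j=j'=m=m'$.

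The decisive point, and the reason the complex alphabet replaces real Rademacher signs, is that $\E{a^2}=0$: this is exactly what annihilates every ``two distinct pairs'' term such as $j=j'\neq m=m'$, which would otherwise contribute $\sum_{j\neq m} x_j^2 w_m^2$ and bias the estimator (with real signs $\E{a^2}=1$, leaving such spurious cross terms). Only the diagonal terms survive, each contributing $\E{a^4}\,x_j^2 w_j^2 = x_j^2 w_j^2$, so the sum equals $\sum_j x_j^2 w_j^2 = \norm{x}_w^2$, which is already real and hence unaffected by $\re{\cdot}$, giving $\E{\rho(x,w)}=\norm{x}_w^2$. No step is technically hard; the only obstacle is the bookkeeping of which index patterns survive, and the conceptual crux is recognizing that the vanishing of the second moment is precisely what makes the estimator unbiased.
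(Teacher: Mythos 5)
Your proof is correct and follows essentially the same route as the paper's: expand $\rho(x,w)$ into the quadruple sum over indices, use independence together with the moments $\E{a}=\E{a^2}=\E{a^3}=0$ and $\E{a^4}=1$ to annihilate every off-diagonal index pattern, and conclude that only the terms with $j_1=j_2=j_3=j_4$ survive, giving $\norm{x}_w^2$. The only cosmetic difference is how the real part is handled --- the paper uses the conjugate identity $\re{z}=(z+z^*)/2$, while you commute $\re{\cdot}$ with expectation and note the surviving sum is already real; both are valid.
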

\begin{proof}
    By the definition of $h(g(x), w)$, for any $i\in\{1,\ldots k\}$ we get:
    \[
    \begin{split}
    h(g(x), w)_i & = %
      k^{-1}\left( \sum_{j=1}^d A_{i,j}x_j \right)^2 \left( \sum_{j=1}^d A^2_{i,j}w_j^2 \right) \\
    &= k^{-1}\sum_{(j_1, j_2, j_3)\in I_{3,d}} A_{i,j_1}A_{i,j_2}A^2_{i,j_3}x_{j_1}x_{j_2}w_{j_3}^2.
    \end{split}
    \]
    Consider now the expectation of the term $\E{ A_{i,j_1}A_{i,j_2}A^2_{i,j_3}} $ for a given value $i$.
    Since $\E{ A_{i,j}}=\E{A_{i,j}^2}=\E{ A_{i,j}^3}=0$ for any $i,j$ and the terms in $A$ are independent, we 
    have that $\E{{A_{i,j_1}A_{i,j_2}A^2_{i,j_3}}} = 1$ if  $j_1 = j_2 = j_3$, and 0 otherwise. 
	Then, we get: 
    \[
    \begin{split} \small
    \E{{h(g(x),w)_i}} &= k^{-1} \!\!\!\!\!\!\!\sum_{j_1, j_2, j_3\in I_{3,d}} \!\!\!\!\!\!\! \E{{A_{i,j_1}A_{i,j_2}A^2_{i,j_3}}} x_{j_1}x_{j_2}w_{j_3}^2 \\
    &= k^{-1}\sum_{j=1}^d x^2_{j}w^2_{j} = \norm{x}^2_w / k.
    \end{split}
    \]
    The theorem then follows from the linearity of the sum and of the $\re{\cdot}$ function.   
   \end{proof}

\subsection{Results in high probability}
We now prove, using as the main tool the newly developed Theorem~\ref{thm:tail1}, that the estimator for the weighted norm is concentrated around its mean with high probability. In particular, the concentration depends on $k$, the output space dimension. The concentration bounds then allow to choose $k$ depending on the desired accuracy of the estimate. 

Since the results of Theorem~\ref{thm:tail1} apply only to pairwise distinct indices, we split $\rho(g(x), w) = \sum_{i=1}^k \re{h_i(g(x),w)^2}$ into 4 sums, each over a set of pairwise distinct indices, and bound each term individually. 
Here we recall that $I_{\gamma,d}=\{1,\ldots, d\}^\gamma$ and $I'_{\gamma,d}=\{(j_1, \ldots, j_\gamma)\in I_{\gamma,d}: j_p\neq j_q \quad  \forall p,q \in \{1,\ldots \gamma\}, p\neq q\}$.
\begingroup \small
\allowdisplaybreaks
\begin{align*} 
    k \cdot\! \rho(g(x), w)  &= \sum_{i=1}^k \sum_{j_1, j_2, j_3 \in I_{3,d}}\!\!\!\! \re{A_{i,j_1}A_{i,j_2}A^2_{i,j_3}} x_{j_1}x_{j_2} w_{j_3}^2  \\
    &\!= \sum_{i=1}^k\sum_{j_1, j_2, j_3 \in I'_{3,d}}\!\!\!\! \re{A_{i,j_1}A_{i,j_2}A_{i,j_3}^2} x_{j_1}x_{j_2}w_{j_3}^2 \\
    & \quad + 2 \sum_{i=1}^k\sum_{j_1, j_2 \in I'_{2,d}} \! \re{A_{i,j_1} A_{i,j_2}^3} x_{j_1} x_{j_2}w_{j_2}^2 \\
    & \quad + \sum_{i=1}^k\sum_{j_1, j_2 \in I'_{2,d}} \! \re{A_{i,j_1}^2 A_{i,j_2}^2} x_{j_1}^2 w_{j_2}^2 \\
    & \quad + \sum_{i=1}^k \sum_{j=1}^d w_j^2 x_j^2 \\
    & \!= H_1 + H_2 + H_3 + k\norm{x}_w^2.
\end{align*}
\endgroup

\begin{remark}
It might be tempting to simply use a Khintchine-like inequality \cite{HAAGERUP07}, such as in \citet[Lemma 4]{Kane11}, to directly estimate $k \cdot\! \rho(g(x), w)$ using $A_{j_1}A_{j_2}A^2_{j_3}$ as the random variables and $x_{j_1} x_{j_2} w_{j_3}^2$ as their respective weights. This will however fail because the random variables are not independent. For this reason, we resort to Theorem~\ref{thm:tail1} and Bonami's hypercontractive inequality to bound the tails of the Rademacher chaos.
\end{remark}

Note that each $A_{i,j}$ can be written as $s(1-r + i(1+r))/2$, with $s$ and $r$ independent Rademacher random variables. 

We  then bound the tail probabilities of the individual terms $H_\ell, \ \ell = 1, 2, 3$ by expanding them into sums of Rademacher chaoses and using Theorem~\ref{thm:tail1}. 

In particular, we decompose each $H_\ell$ as a sum of $k$ independent terms $H_{\ell,i}$, and we show that $W(H_{\ell,i})^2$, for any $i\in\{1,\ldots k\}$, can be upper bounded by $\norm{w}_4^2 \norm{x}_2^2$. 

We report such bounds in the Appendix, and provide here a bound on $H = H_1 + H_2 + H_3$, which is obtained via a union bound over the three terms.
In what follows, let $c_1$ and $c_2$ be the constants of Theorem~\ref{thm:tail1} for $\gamma=4$, and consequently $F(t) = c_1 \sqrt{k} \sqrt{t} + c_2 t^2$.

\begin{lemma} \label{lem:H}
	We have that $\pr{|H| > 4\norm{w}_4^2 \norm{x}_2^2 F(t)} \leq 8e^{-t}$ $\forall t>0$.
\end{lemma}

We can now use the previous concentration result for $H$ to obtain a tail probability bound for the error $|\rho(g(x), w) - \n{x}_w^2|$ incurred by the estimator.

\begin{lemma}\label{lem:boundhp}
    For any given $x, w \in \mathbb{R}^d$, we have
	\[ \small
	\begin{split}
	&\pr{|\rho(g(x), w) - \norm{x}_w^2| > t} \\
	&\quad \leq \begin{cases}
	8 \exp\left( - k \frac{1}{\left(c_1 8\norm{w}_4^2 \norm{x}_2^2 \right)^2} t^2 \right) & \text{if } (\dagger) \\
	8 \exp\left( - \sqrt{k} \frac{1}{\sqrt{c_2 8\norm{w}_4^2 \norm{x}_2^2 }} \sqrt{t} \right) & \text{otherwise}
	\end{cases}
	\end{split}
	\]
    with $(\dagger)$ the event: $0 < t < c_1^{\nicefrac{4}{3}}c_2^{-\nicefrac{1}{3}} 8\norm{w}_4^2 \norm{x}_2^2 k^{-\nicefrac{1}{3}}$.
\end{lemma}
\ifnum\longv=1
\begin{proof}
    We have 
    $
    k\rho(g(x), w) - \ k\norm{x}_w^2 = H_1 + H_2 + H_3 = H,
    $
    which in turn implies that 
    \[
    \pr{|\rho(g(x), w) - \norm{x}_w^2| > t} = \pr{|H| > kt} =  \pr{|H| > u}.
    \]

	From Lemma~\ref{lem:H}, we have that 
	\[
	\pr{|H| > 4\norm{w}_4^2 \norm{x}_2^2 \left( c_1 \sqrt{k} \sqrt{v} + c_2 v^2 \right)} \leq 8 e^{-v}.
	\]
	Then, for $v < (c_1/c_2)^{\nicefrac{2}{3}} k^{\nicefrac{1}{3}}$,  we have that $2 c_1 \sqrt{k} \sqrt{v} \geq c_1 \sqrt{k} \sqrt{v} + c_2 v^2$ and therefore $\pr{|H| >  8\norm{w}_4^2 \norm{x}_2^2 c_1 \sqrt{k} \sqrt{v}} \leq 8e^{-v}$.
	Similarly, for $v \geq (c_1/c_2)^{\nicefrac{2}{3}} k^{\nicefrac{1}{3}}$, we have that $\pr{|H| >  8\norm{w}_4^2 \norm{x}_2^2 c_2 v^2} \leq 8e^{-v}$.

	Then, setting $u =  8\norm{w}_4^2 \norm{x}_2^2 c_1 \sqrt{k} \sqrt{v}$ for the first bound and $u =  8\norm{w}_4^2 \norm{x}_2^2 c_2 v^2$ for the second one yields
	\[ \small
	\begin{split}
	&\pr{|H| > u} \leq 
		 \begin{cases}
		8 \exp\left( - \frac{1}{\left(c_1 8\norm{w}_4^2 \norm{x}_2^2 \right)^2} \frac{u^2}{k} \right) & \text{if } u < c_1^{\nicefrac{4}{3}}c_2^{-\nicefrac{1}{3}} 8 \norm{w}_4^2 \norm{x}_2^2 k^{\nicefrac{2}{3}} \\
		8 \exp\left( - \frac{1}{\sqrt{c_2 8\norm{w}_4^2 \norm{x}_2^2 }} \sqrt{u} \right) & \text{otherwise}.
		\end{cases}
	\end{split}
	\]
	
	Finally, making the substitution $u = kt$, we obtain the claim.
\end{proof}
\fi

We then prove our main result Theorem~\ref{thm:eps}.
In particular, we obtain an additive error bound akin to the ones obtained for dot products under random projections \cite{Kaban}, which can be then refined to a multiplicative error by rescaling the accuracy parameter $\epsilon$.

\begin{thm:eps} 
    Let $\epsilon, \delta > 0$. Let $\Delta$ be a suitable parameter and 
	$
	k \geq \BOM{ \max\left\{  \frac{\Delta^2 \ln(8/\delta)}{\epsilon^2}, \   \frac{\Delta\ln(8/\delta)^2}{\epsilon} \right\}}.
	$
    Then there exists a \emph{linear} function $g(x): \mathbb{R}^d\rightarrow \mathbb{C}^k$ and an estimator $\rho(g(x), w): \mathbb{C}^k \times \mathbb{R}^d \rightarrow \mathbb{R}$ such that for any given $x, w \in \mathbb{R}^d$, with probability at least $1-\delta$,
    \[
     | \rho(g(x), w) -  \norm{x}_w^2 | <  \epsilon  \norm{w}_4^2 \norm{x}_2^2 / \Delta.
    \]
     In particular, if $\norm{x}^2_2\norm{w}^2_4/\norm{x}^2_w \leq \Delta$, we get, with probability at least $1-\delta$,
    \[
     | \rho(g(x), w) -  \norm{x}_w^2 | <  \epsilon \norm{x}_w^2.
    \]
\end{thm:eps}
\begin{proof}[Proof sketch]
    We set 
    $
	t = \epsilon \norm{w}_4^2 \norm{x}_2^2  / \Delta.
    $
    For the Gaussian tail case of the bound described by Lemma~\ref{lem:boundhp}, we take $k = c \Delta^2 \epsilon^{-2} \ln(8/\delta)$. 
    
    Instead, for the long tail part of the bound described by Lemma~\ref{lem:boundhp}, we take $k = c \Delta\frac{ \ln(8/\delta)^2}{\epsilon}$. Both cases yield $\pr{|\rho(g(x), w) - \norm{x}_w^2| > t} \leq \delta$.

    As the failure probability decreases by $k$, taking 
	$k \geq  \max\left\{ c_{1}^2  8^2 \frac{ \Delta^2\ln(8/\delta)}{\epsilon^2}, \ c_{2}  8  \frac{ \Delta \ln(8/\delta)^2}{\epsilon} \right\}$ completes the proof. 
\end{proof}

Finally, we restate the above theorem to prove that the proposed reduction  maintains pairwise distances in a set $S$ of $n$ vectors and for a set of weights $W$ with a bounded relative error.
We remark that, thanks to the linearity of $g$, $\rho(g(x-y),w)$ can be computed from the reduced vector $g(x)$ and $g(y)$ as
$\rho(g(x-y),w) = \rho(g(x)-g(y),w)$.

\begin{corollary} \label{cor:dist}
    Let $\mathcal{X} = \{x_1, \dots, x_n\}$, $x_i \in \mathbb{R}^d$ be a dataset of $n$ vectors. Let $W \subset \mathbb{R}^d$ be the set of the weights of interest, with $|W| = \BO{n}$. 
    For any given $\epsilon > 0$ and $k > c \max\left\{ \Delta ^2 \frac{ \ln(8 n)}{\epsilon^2},  \frac{\Delta \ln(8 n)^2}{\epsilon} \right\}$ where $c$ is a universal constant, we have with high probability that, for any $x,y\in S$ and $w \in W$:
   \[ 
   \big| \rho(g(x-y),w) - \norm{x-y}_w^2 \big| < \epsilon  \norm{w}_4^2 \norm{x-y}_2^2/\Delta.
    \]
\end{corollary}
\begin{proof}
It suffices to apply  Theorem~\ref{thm:eps} with $x'=x-y$ and  $\delta = 1/(|W|n^2)$.
The claim follows by a union bound.
\end{proof}

\subsection{Discussion of results} \label{sec:discussion1}
We briefly discuss our results.
We note that obtaining a multiplicative error guarantee on the estimate of the weighted norm $\norm{x}_w^2$ using the dimensionality reduction given in Equation~\ref{eq:kaban}, which does not  maintain linearity and therefore cannot be used to estimate weighted distances, requires the quantity $(\norm{x}_4^2 \norm{w}_4^2) / \norm{x}_w^2$ to be bounded.
In particular, if $k = \Theta((\Delta/\epsilon)^{2})$ and $\max_{x, w} (\norm{x}_4^2 \norm{w}_4^2) / \norm{x}_w^2 \leq \Delta$, one obtains an $\epsilon$ multiplicative error guarantee. 
This boundedness requirement is intuitively required by the fact that the vectors $x \odot x$ and $w \odot w$ could be orthogonal, and $\norm{x}_w^2$ could therefore be 0. Similar results are obtained for estimating dot products under JL projections \cite{Kaban}.

According to Theorem~\ref{thm:eps}, our novel complex dimensionality reduction technique, which instead guarantees linearity on $x$ and can therefore be applied to compute weighted distances (see Corollary~\ref{cor:dist}), has a similar boundedness requirement.
Indeed, if $k = \Theta((\Delta/\epsilon)^{2})$ and $\max_{x, w} (\norm{x}_2^2 \norm{w}_4^2) / \norm{x}_w^2 \leq \Delta$, one obtains a $\epsilon$ multiplicative error guarantee. 
The dependence on the 2-norm instead of the 4-norm of $x$ slightly weakens the results. Indeed, we note that $\norm{x}_2^2 \geq \norm{x}_4^2$ (Lemma~\ref{lem:lp}), and therefore our multiplicative error guarantee is worse than the non-linear method given by Equation~\ref{eq:kaban}. 
On one end of the spectrum, $w$ has a single non-zero entry, and $k = \Theta(\Delta^2)$ has to be $\Theta(\max_x \{ \norm{x}_2^4 / \min_i |x_i|^4 \} )$ to guarantee an $\epsilon$-approximation.
At the other end of the spectrum, $w$ is near-uniform (e.g., all the entries are within a factor $\zeta$), and $k = \Theta(\Delta^2)$ has to be $\Theta(\max_w \norm{w}_4^4) = \Theta(d)$ to guarantee an $\epsilon$-approximation.
Nonetheless, for sparse weights such that the weighted norm is nonzero, the method yields a valid dimensionality reduction technique.

We observe that in the case where the weights $w$ are close to 1 (i.e., a small re-weighting of all the dimensions), the results of Theorem~\ref{thm:eps} yield no dimensionality reduction.
We address this particular case in the following section.

\subsection{Improved results for near-uniform vectors} \label{sec:block}

We now deal with the case where the entries of both the $x$'s and $w$'s are near-uniform. 

A uniform vector is a vector where all entries are the same.  We define a (block) near-uniform vector as a relaxation of this condition, where splitting the vector into contiguous parts results in sub-vectors with roughly the same norm. More formally, let $1 \leq L \leq d$, and let $d_1, \dots, d_L$ such that $\sum_\ell d_l = d$. 
Let then $x^{(\ell)} \in \mathbb{R}^{d_\ell}$, $\ell = 1, \dots, L$ be such that $x = (x^{(1)} || \dots \| x^{(L)})$, with $\|$ denoting concatenation. 
The same holds for $w$. 
We suppose that $\n{x^{(\ell)}}^2_2 \leq \zeta \n{x}^2_2 / L$ and $\n{w^{(\ell)}}^4_4 \leq \zeta \n{w}^4_4 / L$, which holds, e.g., if all the entries are one within a factor $\zeta$ from any other one.

We will apply the dimensionality reduction to each of the sliced vectors individually, and then combine the estimates to obtain the final estimate for the weighted norm.
We then have that our reduced vector is stored as $\bar g(x) = (g^{(\ell)}(x^{(\ell)}) : \ell = 1, \dots,L)$, with each $g^{(\ell)}$ a function $g$, as in Section 4, defined with output dimension $k$ and pairwise independent matrices $A^{(\ell)}$. Each of such $L$ matrices is a complex $k\times d_\ell$ random matrix, where each entry is an independently and identically distributed random variable over $\{+1, -1, +i, -i\}$.
Therefore, we have that $\bar g(x)$ can be stored with $ k \cdot L$ complex numbers, i.e., $\bar g(x) \in \mathbb{C}^{k \cdot L}$. 
We observe that $\bar g(x)$ consists of a matrix-vector multiplication with a sparse block matrix $A$, made of 
$\ell$ submatrices of size $k \times d/\ell$ and where each submatrix is generated as the complex matrix described in the previous sections.

Let $\bar\rho(\bar g(x), w) = \sum_{\ell=1}^L \rho( g^{(\ell)}(x^{(\ell)}), w^{(\ell)})$, with $\rho$ defined as in Section 4. 
We can rewrite $\bar\rho$ as follows:
\[ \small
\begin{split}
    \bar\rho&(\bar g(x), w) = 
     \sum_{\ell=1}^L \rho( g^{(\ell)}(x^{(\ell)}), w^{(\ell)}) \\
    &\!\!= \sum_{\ell=1}^L \sum_{i=1}^{k} \!\!\!\sum_{\ \ j_1, j_2, j_3 \in I_{3, d_\ell}}\!\!\!\!\!\!\! \re{A^{(\ell)}_{i,j_1}A^{(\ell)}_{i,j_2}(A^{(\ell)})^2_{i,j_3}} \frac{x^{(\ell)}_{j_1}x^{(\ell)}_{j_2} (w_{j_3}^{(\ell)})^2}{k} \\
    &= H + \sum_{\ell=1}^L \n{x^{(\ell)}}^2_w = H + \n{x}^2_w.
 \end{split}
\]

We can then factorize $H$, which includes a sum over $I_{3,d}$ into three sums of Rademacher chaoses $H_1, H_2$ and $H_3$, as done in Section 4. 
Here, the main difference is that the sum that composes $H_1$ is over $k \cdot L$ independent chaoses $H_{1, \ell, i}$. These are not all identically distributed, but rather divided into $L$ blocks of $k$ identically distributed chaoses each. The same holds for $H_2$ and $H_3$.

In particular, we obtain that for chaoses in the $\ell$-th block, they satisfy $W(H_{1, \ell, i}) \leq \n{x_\ell}_2^2 \n{w_\ell}_4^2 / k \leq \zeta^{{\nicefrac{3}{2}}} \n{w}_4^2 \n{x}_2^2/ (k L^{\nicefrac{3}{2}})$.

Then, by following a derivation akin to the one in the previous section, which is reported in the appendix for the sake of brevity, we obtain the following result. 

\begin{lemma}\label{lem:boundhp_extra}
    For any given $x, w \in \mathbb{R}^d$, we have $\forall t>0$
    \[ \small
	\begin{split}
	&\pr{|\bar\rho(\bar g(x), w) - \norm{x}_w^2| > t} \\
	&\quad \leq \max \begin{cases}
	8 \exp\left( - k c_1 \frac{L^3}{\left(\norm{w}_4^2 \norm{x}_2^2 \right)^2} t^2 \right) &  \\
	8 \exp\left( - \sqrt{k} c_2 \frac{L^{\nicefrac{3}{4}}}{(\norm{w}_4^2 \norm{x}_2^2)^{\nicefrac{1}{2}}} \sqrt{t} \right) 
	\end{cases}
	\end{split}
	\]
    with $c_1$ and $c_2$ appropriate constants. %
\end{lemma}

In turn, setting $k$ and $L$ carefully, we obtain a result akin to Theorem~\ref{thm:eps}.

\begin{theorem} \label{thm:eps_extra}
    Let $\epsilon, \delta > 0$. Let $c$ be a suitable universal constant. Let $L$ and $k$ be positive integers such that
	$
	k \geq c \max\left\{ \frac{\Delta^2 \ln(8/\delta)}{L^2 \epsilon^2}, \   \frac{ \Delta \ln(8/\delta)^2 }{ L^{\nicefrac{3}{2}} \epsilon} \right\}.
	$
    Then there exists a \emph{linear} function $g(x): \mathbb{R}^d\rightarrow \mathbb{C}^{k\cdot L}$ and an estimator $\rho(g(x), w): \mathbb{C}^{k\cdot L} \times \mathbb{R}^d \rightarrow \mathbb{R}$ such that for any given $x, w \in \mathbb{R}^d$, with probability at least $1-\delta$,
    \[
    \big| \rho(g(x), w) -  \norm{x}_w^2 \big| < \epsilon \norm{x}_2^2 \norm{w}_4^2 / \Delta.
    \]
     In particular, if $\norm{x}^2_2\norm{w}^2_4/\norm{x}^2_w \leq \Delta$, we get, with probability at least $1-\delta$,
     \[
    (1-\epsilon)\norm{x}_w^2 < \rho(g(x), w) < (1+\epsilon)\norm{x}_w^2.
    \]
\end{theorem}

We remark that, when the partition $d_1, \dots, d_L$ is fixed for all vectors $x \in \mathcal{X}$, the function $\bar g$ is still linear, the dimensionality reduction maintains pairwise distances in a set of vectors $\mathcal{X}$.

\subsection{Discussion of results for near-uniform vectors} \label{sec:discussion2}

In the case of near-uniform vectors, the results of Theorem~\ref{thm:eps_extra} provide a significant improvement over the results of Theorem~\ref{thm:eps}.
Indeed, the balanced distribution across the entries of the vectors allows for more efficient dimensionality reduction via independent projections on the partitioned components of the vectors.

In particular, the output dimension $k \cdot L$ is decreasing in $L$. Therefore, if the vectors $x$ and $w$ are near-uniform for large values of $L$, it is beneficial to set $k$ to be as small as possible, i.e., setting $k=1$, and 
\[ \small
L \geq c \max \left\{ \left( \nicefrac{\Delta}{\epsilon} \right) \ln{(8/\delta)}^{\nicefrac{1}{2}},  \left( \nicefrac{\Delta}{\epsilon} \right)^{\nicefrac{2}{3}} \ln{(8/\delta)}^{\nicefrac{4}{3}}  \right\}.
\]

In this case, the dependency of the output dimension on $\Delta$ is lowered from up to $\Theta(\Delta^2)$ of the original technique, as discussed in Section~\ref{sec:discussion1}, to only $\Theta(\Delta)$. 
In fact, we suppose that there exists some $\zeta > 1$ such that all entries of $w$ lie in the interval $[1/\sqrt \zeta, \sqrt \zeta ]$ (i.e. no two entries of $x$ can be weighted arbitrarily differently).
Then, we have that $\n{w}_4^2 \leq \zeta \sqrt d$ and $\n{x}_w^2 \geq \nicefrac{1}{\zeta} \n{x}_2^2$. Thus, setting $\Delta = \zeta^2\sqrt{d}=\Theta(d^{\nicefrac{1}{2}})$ ensures that $\norm{x}^2_2\norm{w}^2_4/\norm{x}^2_w \leq \Delta$.
This means that to obtain a multiplicative $\epsilon$-approximation to $\n{x}_w^2$, the dimension $k\cdot L = \Theta(\Delta)$ of the reduced vectors has to be only $\Theta(d^{\nicefrac{1}{2}})$.

However, the technique to partition the vectors into $L$ smaller independent vectors guarantees a reduction in the dimensionality of the reduced vector only if the input vectors are guaranteed to be near-uniform.  This is indeed a reasonable assumption on the weight vectors $w$, as one can expect to rescale the importance of each dimension just by a constant factor. On the other hand, to guarantee the near-uniformity condition on the input vectors $x$'s, one must have some prior knowledge on the data-generating distribution. 
We observe that, by bounding the maximum entry of the input and weight vectors, a random permutation of the dimension can provide a near-uniform distribution with high probability.
Each entry is, e.g., uniformly distributed, one can get probabilistic guarantees via Bernstein's inequality.

Finally, we observe that the sparse block structure of the linear map allows the use of hardware accelerators for matrix multiplication (e.g., Google TPU, Nvidia TC)  for further speeding up the computation.
We leave the development of improvements to the dimensionality of the reduced vectors via partitioning strategies under weaker guarantees on the $x$'s to future work.

\subsection{Experimental evaluation}
In Appendix~\ref{sec:app:exp}, we report the results of a proof-of-concept experimental evaluation of our techniques.
In particular, the experiments highlight the significant reduction of the variance of the estimator when using the sparse block matrices.
Moreover, the experimental results showcase the applicability of the method on sparse vectors, as discussed in Section~\ref{sec:discussion1}, as well as the poor quality of the estimate for near-uniform vectors for the method described in Theorem~\ref{thm:tail1}, suggesting that the probabilistic analysis of the method is tight.
Finally, from the experiments, we indeed see a gap in the quality of the estimate between the linear and nonlinear method by \citet{Kaban}, as argued in Section~\ref{sec:discussion1}. Interestingly, the sparse map with the decomposition into $L$ sub-vectors almost closes the gap with the non-linear method, as argued in Section~\ref{sec:discussion2}.

\section{Discussion and Conclusions}
In this paper, we provided the first dimensionality reduction techniques that are able to cope with dynamically-weighted Euclidean distances,
as well as novel concentration inequalities for sums of independent Rademacher chaoses.

An interesting open question for future work is to provide a fast method for dimensionality reductions in the complex vector space similar to the Fast JL transform~\cite{AilonC09}.
We furthermore conjecture that the complex map that we provide here admits a generalization to the estimation of arbitrary (weighted) $L_p$ norms. Indeed, let $h = \sum_{j_1, \dots j_p}\prod_{k=1}^p A_{j_k}x_{j_k}$. If $A$ is a uniform on the $p$-th roots of $1$, we have that $\mathbb{E}[\prod_{k=1}^p A_{j_k}] = 1$ if all the $j_k$'s are the same, and $0$ otherwise. Therefore, $\mathbb{E}[ h ] = \| x \|_p^p$. However, the variance of the estimator grows with $p$, and novel approaches should be developed in order to reduce this growth.

We envision that an interesting research direction is the application of our techniques in downstream applications. For example, in machine learning, the weighted distance, phrased as the Mahalanobis distance with diagonal covariance matrix, is used as a time-efficient alternative to dynamic time warping \cite{prekopcsak10}, and as a substitute for the Euclidean distance in RBF-like kernels  \cite{abe05,kamada06}. These methods could then benefit from our techniques.
We believe that our method might be of interest also in the weighted least squares problem \citep[Chapter 5]{golub13}, which provides robust estimators in the presence of uneven reliability in the measurement \cite{Fox15}. 
Here, given a data matrix $X \in \mathbb{R}^{n \times d}$, an observation vector $y \in \mathbb{R}^n$, and a weight vector $w \in \mathbb{R}^n$ (with $w_i \ge 0$ for all $i$), the goal is to find $\theta \in \mathbb{R}^d$ that minimizes the weighted residual norm $\lVert X \theta - y\rVert_w^2 = \sum_{i=1}^n w_i^2 (X_i^\top \theta - y_i)^2$. JL maps have been used to reduce the complexity of ordinary least squares 
 \citep{Yang2015}, and we envision that our technique could be used in a similar fashion for weighted least squares under dynamic weights.
Finally, we believe that our constructions can be of interest for privacy-preserving similarity search, as it might allow the release of datasets that allow users to detect if there are near points within the desired weighted norms without releasing the details of the vector, similar to what done in \citet{Blocki} for JL.

\section*{Acknowledgments}
This work was supported in part by MUR
PRIN 2022TS4Y3N EXPAND project, by
MUR PNRR CN00000013 National Center for HPC, Big Data and Quantum Computing, by Uni-Impresa Big-Mobility project, and by
Marsden Fund (MFP-UOA2226).

\section*{Impact Statement}
This paper presents work whose goal is to advance the field of Machine Learning. There are many potential societal consequences of our work, none which we feel must be specifically highlighted here.

\balance
\bibliography{biblio}
\bibliographystyle{icml2025}

\newpage
\appendix

\onecolumn
\section*{Appendix}

\section{Missing proofs}
\subsection{Section 2}
\newtheorem*{lem:lp}{Lemma~\ref{lem:lp}}
\begin{lem:lp}
    For any $x \in \mathbb{R}^d$ and any $0 < p < q < +\infty$, we have
    $
    \norm{x}_q \leq \norm{x}_p
    $.
\end{lem:lp}
\begin{proof}
    If $x = 0$ the claim holds trivially. Else, let $y_i = |x_i|/\norm{x}_q \leq 1$.
    Then we have 
    \[
      \begin{split}
    \norm{x}_p^p = \sum_{i=1}^d |x_i|^p = \sum_{i=1}^d |y_i|^p \norm{x}_q^p 
    \geq \sum_{i=1}^d |y_i|^q \norm{x}_q^p = \norm{x}_q^p \sum_{i=1}^d |x_i|^q / \norm{x}_q^q = \norm{x}_q^p.
    \end{split}
    \]
\end{proof}

\newtheorem*{lem:l2}{Lemma~\ref{lem:l2}}
\begin{lem:l2}
For any $p, q\geq 1$ and any $x, y \in \mathbb{R}^d$, we have
$
 \sum_{i=1}^d |x_i|^p |y_i|^q  \leq \norm{x}_2^{p}\norm{y}_2^{q}
$.
\end{lem:l2}
\begin{proof}
By applying Lemmas \ref{lem:cs} and \ref{lem:lp}, we get:
\[
\begin{split}
 \sum_{i=1}^d x_i^p y_i^q \leq \left( \sum_{i=1}^d |x_i|^{2p} \right)^{1/2}\left( \sum_{i=1}^d |y_i|^{2q} \right)^{1/2}
 \leq 
\norm{x}_{2p}^{p}\norm{y}_{2q}^{q} \leq \norm{x}_2^{p}\norm{y}_2^{q}.
\end{split}
\]

\end{proof}

\subsection{Section 3}

\newtheorem*{lem:bonami-gen}{Lemma~\ref{lem:bonami-gen}}
\begin{lem:bonami-gen}
	Let $S = \sum_{\sigma\in G_\gamma} S^\sigma$ be a Rademacher chaos of order $\gamma \geq 2$. Then,
    $
        \n{S}_q \leq (q-1)^{\gamma/2}\sqrt{\gamma!}W(S)
    $.
\end{lem:bonami-gen}
\begin{proof}
    Let $\overline{S}^\sigma = W(S^\sigma)^2 = \n{S^\sigma}_2^2 \ \forall \sigma$ and $\overline{S} = W(S)^2 = \sum_{\sigma\in G_\gamma} \overline S^\sigma$ 
    We have that:

\begin{align*}
        0 &\leq \sum_{\sigma\in G_\gamma}\sum_{\rho\in G_\gamma} \left(\sqrt{\overline S^\sigma} - \sqrt{\overline S^\rho}\right)^2 =
         \sum_{\sigma\in G_\gamma}\sum_{\rho\in G_\gamma} \left(\overline S^\sigma+\overline S^\rho - 2\sqrt{\overline S^\sigma \overline S^\rho}\right) =
         2  \gamma! \sum_{\sigma\in G_\gamma} \overline S^\sigma  -  2\sum_{\sigma\in G_\gamma}\sum_{\rho\in G_\gamma} \sqrt{\overline S^\sigma \overline S^\rho} \leq \\
         &\leq 2\gamma!\sum_{\sigma\in G_\gamma}  \overline S^\sigma - 2 \left(\sum_{\sigma\in G_\gamma} \sqrt{\overline S^\sigma}\right)^2.
         \end{align*}
    This implies that $ 
         \sqrt{\gamma!\overline S} \geq \sum_{\sigma\in G_\gamma} \sqrt{\overline S^\sigma}$ and thus $
         \sqrt{\gamma!}W(S)  \geq \sum_{\sigma\in G_\gamma} \n{S^\sigma}_2$.
    Then we can use the triangular inequality and Bonami's hypercontractive inequalities to prove the lemma: 
    \[
        \n{S}_q = \n{\sum_{\sigma\in G_\gamma}S^\sigma}_q \leq 
        \sum_{\sigma\in G_\gamma}\n{S^\sigma}_q \leq 
        (q-1)^{\gamma/2}\sum_{\sigma\in G_\gamma}\n{S^\sigma}_2 \leq (q-1)^{\gamma/2}\sqrt{\gamma!}W(S).
    \]
\end{proof}

\newtheorem*{lem:bonamitail}{Lemma~\ref{lem:bonamitail}}
\begin{lem:bonamitail} 
    Let $S$ be a Rademacher chaos of order $\gamma \geq 2$. Then for any  $\gamma' \geq \gamma$ and $t>0$, we have:
    \[\pr{|S| > t} \leq e^2 \exp\left( - (e\sqrt{\gamma!}W(S))^{-\nicefrac{2}{\gamma'}} t^{\nicefrac{2}{\gamma'}} \right).
    \]
\end{lem:bonamitail}
\begin{proof}
    When  $t \leq e \sqrt{\gamma!}W(S) 2^{\nicefrac{\gamma'}{2}}$, the lemma trivially follows since $e^2 \exp\left( - (e \sqrt{\gamma!}W(S))^{-\nicefrac{2}{\gamma'}} t^{\nicefrac{2}{\gamma'}} \right)\geq 1$.   
      Assume now that $t > e \sqrt{\gamma!}W(S) 2^{\nicefrac{\gamma'}{2}}$.
    We apply Markov's inequality and lemma ~\ref{lem:bonami-gen}, obtaining
    \[
    \pr{|S| > t} \leq \E{|S|^q} t^{-q} = \n{S}_q^q t^{-q} \leq \left( \frac{q^{\nicefrac{\gamma}{2}} \sqrt{\gamma!}W(S) }{t} \right)^q  \leq \left( \frac{q^{\nicefrac{\gamma'}{2}} \sqrt{\gamma!}W(S) }{t} \right)^q, \quad \forall q \geq 2.
    \]
    Then, by setting $q = t^{\nicefrac{2}{\gamma'}}(e \sqrt{\gamma!}W(S)))^{-\nicefrac{2}{\gamma'}} \geq 2$, we have 
    $
    \pr{|S| > t} \leq \exp\left( - (e \sqrt{\gamma!}W(S))^{-\nicefrac{2}{\gamma'}} t^{\nicefrac{2}{\gamma'}} \right),
    $
    from which the Lemma follows.   
\end{proof}

\newtheorem*{cor:tail2}{Corollary~\ref{cor:tail2}}
\begin{cor:tail2} 
	Let $\{ S_i \}_{i=1,\dots,k}$ be a sequence of $k$ independent and identically distributed Rademacher chaoses of order $\gamma'$.  Then $\forall \gamma \geq 3$ such that $\gamma' \leq \gamma$, there exist constants $c_1(\gamma), c_2(\gamma)$ depending only on $\gamma$ such that
	\[ %
	\pr{\left| \sum_{i=1}^k S_i \right| >t} \leq \begin{cases}
		2 \exp\left( - \frac{1}{4c_1(\gamma)^2 W(S)^2} \frac{t^2}{k} \right) & \text{if } 0 < t < 2 \left(\frac{c_1(\gamma)^\gamma}{c_2(\gamma)}\right)^{\frac{1}{\gamma-1}} W(S) k^{\frac{\gamma}{2\gamma - 2}} \\
		2 \exp\left( -\left(\frac{t}{2 c_2(\gamma) W(S)}\right)^{\nicefrac{2}{\gamma}}   \right) & \text{otherwise.}
	\end{cases}
	\]
\end{cor:tail2}
\begin{proof}
	Let $\Delta = (c_1/c_2)^{\frac{2}{\gamma-1}} k^{\frac{1}{\gamma-1}}$ and let $\Delta' =  2 \left(\frac{c_1^\gamma}{c_2}\right)^{\frac{1}{\gamma-1}} W(S) k^{\frac{\gamma}{2\gamma - 2}}$. From Theorem~\ref{thm:tail1} we have $\forall u>0$
	\[
	\pr{\left| \sum_{i=1}^k S_i \right| > W(S) \left( c_1 \sqrt{k} \sqrt{u} + c_2 u^{\nicefrac{\gamma}{2}} \right)  } \leq 2 e^{-u}.
	\]

	For $0 < u < \Delta$,  we have that $c_1 \sqrt{k} \sqrt{u} + c_2 u^{\nicefrac{\gamma}{2}} \leq 2 c_1 \sqrt{k} \sqrt{u}$. 
	Then,
	\[
	\begin{split}
	\pr{\left| \sum_{i=1}^k S_i \right| > 2 c_1 W(S) \sqrt{k} \sqrt{u}} &\leq \pr{\left| \sum_{i=1}^k S_i \right| > W(S) \left( c_1 \sqrt{k} \sqrt{u} + c_2 u^{\nicefrac{\gamma}{2}} \right)  } 
		\leq 2 e^{-u}.
	\end{split}
	\]
 By setting $u=t^2 / \left(4c_1^2 W(S)^2 k\right)$, we have $u< \Delta$ when $t< \Delta'$. Then the above inequality gives the first part of the inequality. 

	For $u \geq \Delta$,  we have that $c_1 \sqrt{k} \sqrt{u} + c_2 u^{\nicefrac{\gamma}{2}} \leq 2 c_2 u^{\nicefrac{\gamma}{2}}$. 
	Then,
	\[
	\begin{split}
	\pr{\left| \sum_{i=1}^k S_i \right| > 2 c_2 W(S) u^{\nicefrac{\gamma}{2} }} &\leq \pr{\left| \sum_{i=1}^k S_i \right| > W(S) \left( c_1 \sqrt{k} \sqrt{u} + c_2 u^{\nicefrac{\gamma}{2}} \right)  } 
		\leq 2 e^{-u}.
	\end{split}
	\]
 By setting $u=\left( t/(2c_2W(S))\right)^{2/\gamma}$, we have $u\geq \Delta$ when $t\geq \Delta'$. Then the above inequality gives the second part of the inequality.

\end{proof}

\subsection{Section 4.2}

We bound the tail probabilities of the individual terms $H_\ell$ by expanding them into sums of Rademacher chaoses and using Theorem~\ref{thm:tail1}.
In what follows, let $F(t) = c_1 \sqrt{k}\sqrt{t} + c_2t^2$, with $c_1$ and $c_2$ the constants of Theorem~\ref{thm:tail1} for $\gamma=4$.

\begin{lemma} \label{lem:h2}
    We have that $\pr{|H_1| > \norm{w}_4^2 \norm{x}_2^2 F(t)} \leq 4 e^{-t}$. $\forall t>0$
\end{lemma}
\begin{proof}
    By setting $\tau_{j_1, j_2, j_3} = x_{j_1} x_{j_2} w^2_{j_3}$, we rewrite $H_1$ as 
    \begingroup %
    \allowdisplaybreaks
    \begin{align*}
        H_1 &= \sum_{i=1}^k\sum_{(j_1, j_2, j_3) \in I'_{3,d}} \re{A_{i,j_1} A_{i,j_2}A_{i,j_3}^2}  \tau_{j_1, j_2, j_3} \\
        &= \sum_{i=1}^k\sum_{(j_1, j_2, j_3) \in I'_{3,d}}  (s_{i,j_1}s_{i,j_2}r_{i,j_1}r_{i,j_3} + s_{i,j_1}s_{i,j_2}r_{i,j_2}r_{i,j_3})  \tau_{j_1, j_2, j_3}/2\\
        & = \sum_{i=1}^k\sum_{(j_1, j_2, j_3) \in I'_{3,d}}  s_{i,j_1}s_{i,j_2}r_{i,j_1}r_{i,j_3} \tau_{j_1, j_2, j_3}/2 
	+ \sum_{i=1}^k\sum_{(j_1, j_2, j_3) \in I'_{3,d}}  s_{i,j_1}s_{i,j_2}r_{i,j_2}r_{i,j_3}  \tau_{j_1, j_2, j_3}/2 \\
        & = \sum_{i=1}^k  H_{1,1,i}/2 + \sum_{i=1}^k  H_{1,2,i}/2 =  H_{1,1}/2 + H_{1,2}/2.
    \end{align*} 
    \endgroup
Since $H_{1,1}$ and $H_{1,2}$  have the same distribution, we focus on the first one. 
There are three indexes but four independent random variables in each term of the sum in $H_{1,1}$, we perform an index transformation to get a chaos of order 4.
Let $\hat{\tau} : I'_{4,2d} \rightarrow \mathbb{R}$ be 
\[
\hat{\tau}_{j_1, j_2, j'_1, j_3} = \begin{cases}
	\tau_{j_1, j_2, j_3 - d} & \text{if }  j'_1 = j_1 + d, j_1,j_2 \leq d, j_3 > d, j_3 \neq j_2+d  \\
	0 & \text{otherwise.} \\
\end{cases}
\]
Let also $X_{i,:} = (s_{i,1}, \dots, s_{i,d}, r_{i,1}, \dots, r_{i,d})$. Note that the components of $X$ are independent.
Then, we can rewrite $H_{1,1}$ as 
\[
H_{1,1} = \sum_{i=1}^k\sum_{(j_1, j_2,j'_1, j_3) \in I'_{4,2d}}  X_{i,j_1}X_{i,j_2}X_{i,j'_1}X_{i,j_3} \hat{\tau}_{j_1, j_2,j'_1, j_3}.
\]
We now upper bound $W(H_{1,1,i})^2$ for any $i\in\{1,\ldots k\}$; standard arguments give
\[ %
    \begin{split}
    W(H_{1,1,i})^2 &  = \sum_{(j_1, j_2,j'_1, j_3) \in I'_{4,2d}}  \hat{\tau}_{j_1, j_2,j'_1, j_3}^2 = \sum_{(j_1, j_2, j_3) \in I'_{3,d}} \left(
    x_{j_1}x_{j_2}w_{j_3}^2 \right)^2\\
     &= \sum_{(j_1, j_2, j_3) \in I'_{3,d}} 
    x_{j_1}^2 x_{j_2}^2 w_{j_3}^4 
   \leq   \norm{w}_4^4 \norm{x}_2^4.
      \end{split}
    \]   
The final inequality holds as: 
\[ %
\begin{split}
\sum_{(j_1, j_2, j_3) \in I'_{3,d}} & x_{j_1}^2 x_{j_2}^2 w_{j_3}^4 
\leq  \sum_{(j_1, j_2, j_3) \in I_{3,d}} |x_{j_1}^2 x_{j_2}^2 w_{j_3}^4  |\\
&\leq  \left(\sum_{j_1\in I_{1,d}} | x_{j_1}|^2  \right)\left(\sum_{j_2\in I_{1,d}} |x_{j_2}|^2\right) \left(\sum_{j_3 \in I_{1,d}} |w_{j_3}|^4\right) \\
&= \norm{w}_4^{4} \norm{x}^4_2.
 \end{split}
\]

We can then apply Theorem~\ref{thm:tail1} with $\gamma=4$, from which follows that 
\[
	\pr{|H_{1,1}| >  \norm{w}_4^2 \norm{x}_2^2 F(t)} \leq \pr{|H_{1,1}| > W(H_{1,1,i}) F(t)}  \leq 2e^{-t}.
\]
Since a similar bound holds for $H_{2,2}$, and since $2 H_1 = H_{1,1}+H_{1,2}$, an union bound gives
    \[
    \begin{split}
    \pr{|H_1| > \norm{w}_4^2 \norm{x}_2^2 F(t)} &= \pr{|H_{1,1}+H_{1,2}| > 2 \norm{w}_4^2 \norm{x}_2^2 F(t)} \\
    & \leq 2 \pr{|H_{1,1}| > \norm{w}_4^2 \norm{x}_2^2 F(t)}  \leq 4e^{-t}.  \end{split}
    \]
\end{proof}

\begin{lemma} \label{lem:h3}
    We have that $\pr{|H_2| > 2\norm{w}_4^2 \norm{x}_2^2 F(t) } \leq 2e^{-t}$. $\forall t>0$
\end{lemma}
\begin{proof}
   By setting $\tau_{j_1, j_2} = x_{j_1}x_{j_2} w_{j_2}^2 $, we rewrite $H_2$ as 
    \[
    \begin{split}
        H_2 &= 2 \sum_{i=1}^k\sum_{(j_1, j_2) \in I'_{2,d}} \re{A_{i,j_1} A_{i,j_2}^3}  \tau_{j_1, j_2} \\
        &= 2 \sum_{i=1}^k\sum_{(j_1, j_2) \in I'_{2,d}}  s_{j_1}s_{j_2}(1+r_{j_1}r_{j_2}) \tau_{j_1, j_2}/2\\
        & =   \sum_{i=1}^k\sum_{(j_1, j_2) \in I'_{2,d}}  s_{j_1}s_{j_2} \tau_{j_1, j_2} + \sum_{i=1}^k\sum_{(j_1, j_2) \in I'_{2,d}}  s'_{j_1}s'_{j_2} \tau_{j_1, j_2} \\
    \end{split}
    \]
where $s'_j = s_{j}r_{j}$ and has the same distribution of a Rademacher random variable.

Let $\hat{\tau} : I'_{2,2d} \rightarrow \mathbb{R}$ be
\[
\hat{\tau}_{j_1,j_2} = \begin{cases}
	\tau_{j_1, j_2} & \text{if } j_1 \leq d, j_2 \leq d \\
	\tau_{j_1-d, j_2-d} & \text{if } j_1 > d, j_2 > d \\
	0 & \text{otherwise.}
\end{cases}
\]
Let also $X_{i,:} = (s_{i,1}, \dots, s_{i,d}, s'_{i,1}, \dots, s'_{i,d},)$. Note that the components of $X$ are independent. 

Then, we can rewrite $H_2$ as 
\[
H_2 =  \sum_{i=1}^k\sum_{(j_1, j_2) \in I'_{2,2d}} X_{i,j_1}X_{i,j_2} \hat{\tau}_{j_1,j_2} = \sum_{i=1}^k H_{2,i}.
\]

 Then, we get that:
  \[
    \begin{split}
    W(H_{2,i})^2 = \sum_{(j_1, j_2) \in I'_{2,2d}} \hat{\tau}_{j_1,j_2}^2  = 2\sum_{(j_1, j_2) \in I'_{2,d}} \left(x_{j_1} x_{j_2} w_{j_2}^2  \right)^2
       = 2\sum_{(j_1, j_2) \in I'_{2,d}} x_{j_1}^2 x_{j_2}^2 w_{j_2}^4      
   \leq 4\norm{w}_4^4 \norm{x}_2^4.
      \end{split}
    \]   

Since $H_{2}$ is a sum of independent Rademacher chaoses of order $2 \leq 4$, from Theorem~\ref{thm:tail1} we have that 
\[
	\pr{|H_2| > 2\norm{w}_4^2 \norm{x}_2^2 F(t) } \leq \pr{|H_{2}| > W(H_{2,i}) F(t)} \leq 2e^{-t},
\]
and the claim follows.
\end{proof}

\begin{lemma}
    We have that $\pr{|H_3| > \norm{w}_4^2 \norm{x}_2^2 F(t)} \leq 2e^{-t}$. $\forall t>0$
\end{lemma}
\begin{proof}
   By setting $\tau_{j_1, j_2} = x_{j_1}^2 w_{j_2}^2 $, we rewrite $H_3$ as 
    \[
    \begin{split}
        H_3 = \sum_{i=1}^k\sum_{(j_1, j_2) \in I'_{2,d}} \re{A_{i,j_1}^2 A_{i,j_2}^2}  \tau_{j_1, j_2} 
        =  \sum_{i=1}^k\sum_{(j_1, j_2) \in I'_{2,d}}  r_{j_1}r_{j_2} \tau_{j_1, j_2} = \sum_{i=1}^k H_{3,i}
    \end{split}
    \]
 Then we get:
  \[
    \begin{split}
    W(H_{3,i})^2   = \sum_{(j_1, j_2) \in I'_{2,d}} x_{j_1}^4 w_{j_2}^4   
   \leq \norm{w}_4^4 \norm{x}_2^4.
      \end{split}
    \]   

Since $H_{3}$ is a sum of independent Rademacher chaoses of order $2 \leq 4$, from Theorem~\ref{thm:tail1} we have that 
\[
	\pr{|H_3| > \norm{w}_4^2 \norm{x}_2^2 F(t)} \leq \pr{|H_{3}| > W(H_{3,i}) F(t)} \leq 2e^{-t},
\]
and the claim follows.
\end{proof}

We then provide a bound on the quantity $H = H_1+H_2+H_3$.

\newtheorem*{lem:H}{Lemma~\ref{lem:H}}
\begin{lem:H} %
	We have that $\pr{|H| > 4\norm{w}_4^2 \norm{x}_2^2 F(t)} \leq 8e^{-t}$. $\forall t>0$
\end{lem:H}
\begin{proof}
	\[
	\begin{split}
	&\pr{|H| > 4\norm{w}_4^2 \norm{x}_2^2 F(t)} = \pr{|H_1+H_2+H_3| > 4\norm{w}_4^2 \norm{x}_2^2 F(t)} \leq \\
            &\leq \pr{|H_1| > \norm{w}_4^2 \norm{x}_2^2 F(t)} + \pr{|H_2| > 2\norm{w}_4^2 \norm{x}_2^2 F(t)} + \pr{|H_3| > \norm{w}_4^2 \norm{x}_2^2 F(t)} \\
            &\leq 4e^{-t} + 2e^{-t} + 2e^{-t}  = 8 e^{-t}. 
	\end{split}
	\]
\end{proof}

\newtheorem*{lem:boundhp}{Lemma~\ref{lem:boundhp}}
\begin{lem:boundhp} %
    For any given $x, w \in \mathbb{R}^d$, we have
	\[ %
	\begin{split}
	&\pr{|\rho(g(x), w) - \norm{x}_w^2| > t} \\
	&\quad \leq \begin{cases}
	8 \exp\left( - k \frac{1}{\left(c_1 8\norm{w}_4^2 \norm{x}_2^2 \right)^2} t^2 \right) & \text{if } (\dagger) \\
	8 \exp\left( - \sqrt{k} \frac{1}{\sqrt{c_2 8\norm{w}_4^2 \norm{x}_2^2 }} \sqrt{t} \right) & \text{otherwise}
	\end{cases}
	\end{split}
	\]
    with $(\dagger)$ the event: $0 < t < c_1^{\nicefrac{4}{3}}c_2^{-\nicefrac{1}{3}} 8\norm{w}_4^2 \norm{x}_2^2 k^{-\nicefrac{1}{3}}$.
\end{lem:boundhp}
\begin{proof}
    We have 
    $
    k\rho(g(x), w) - \ k\norm{x}_w^2 = H_1 + H_2 + H_3 = H,
    $
    which in turn implies that 
    \[
    \pr{|\rho(g(x), w) - \norm{x}_w^2| > t} = \pr{|H| > kt} =  \pr{|H| > u}.
    \]

	From Lemma~\ref{lem:H}, we have that 
	\[
	\pr{|H| > 4\norm{w}_4^2 \norm{x}_2^2 \left( c_1 \sqrt{k} \sqrt{v} + c_2 v^2 \right)} \leq 8 e^{-v}.
	\]
	Then, for $v < (c_1/c_2)^{\nicefrac{2}{3}} k^{\nicefrac{1}{3}}$,  we have that $2 c_1 \sqrt{k} \sqrt{v} \geq c_1 \sqrt{k} \sqrt{v} + c_2 v^2$ and therefore $\pr{|H| >  8\norm{w}_4^2 \norm{x}_2^2 c_1 \sqrt{k} \sqrt{v}} \leq 8e^{-v}$.
	Similarly, for $v \geq (c_1/c_2)^{\nicefrac{2}{3}} k^{\nicefrac{1}{3}}$, we have that $\pr{|H| >  8\norm{w}_4^2 \norm{x}_2^2 c_2 v^2} \leq 8e^{-v}$.

	Then, setting $u =  8\norm{w}_4^2 \norm{x}_2^2 c_1 \sqrt{k} \sqrt{v}$ for the first bound and $u =  8\norm{w}_4^2 \norm{x}_2^2 c_2 v^2$ for the second one yields
	\[ %
	\begin{split}
	&\pr{|H| > u} \leq 
		 \begin{cases}
		8 \exp\left( - \frac{1}{\left(c_1 8\norm{w}_4^2 \norm{x}_2^2 \right)^2} \frac{u^2}{k} \right) & \text{if } 0 < u < c_1^{\nicefrac{4}{3}}c_2^{-\nicefrac{1}{3}} 8 \norm{w}_4^2 \norm{x}_2^2 k^{\nicefrac{2}{3}} \\
		8 \exp\left( - \frac{1}{\sqrt{c_2 8\norm{w}_4^2 \norm{x}_2^2 }} \sqrt{u} \right) & \text{otherwise}.
		\end{cases}
	\end{split}
	\]
	
	Finally, making the substitution $u = kt$, we obtain the claim.
\end{proof}

\begin{thm:eps} 
    Let $\epsilon, \delta > 0$. Let $\Delta$ be a suitable parameter and 
	$
	k \geq \BOM{ \max\left\{  \frac{\Delta^2 \ln(8/\delta)}{\epsilon^2}, \   \frac{\Delta\ln(8/\delta)^2}{\epsilon} \right\}}.
	$
    Then there exists a \emph{linear} function $g(x): \mathbb{R}^d\rightarrow \mathbb{C}^k$ and an estimator $\rho(g(x), w): \mathbb{C}^k \times \mathbb{R}^d \rightarrow \mathbb{R}$ such that for any given $x, w \in \mathbb{R}^d$, with probability at least $1-\delta$,
    \[
     | \rho(g(x), w) -  \norm{x}_w^2 | <  \epsilon \norm{x}_2^2 \norm{w}_4^2 / \Delta.
    \]
     In particular, if $\norm{x}^2_2\norm{w}^2_4/\norm{x}^2_w \leq \Delta$, we get, with probability at least $1-\delta$,
    \[
     | \rho(g(x), w) -  \norm{x}_w^2 | <  \epsilon \norm{x}_w^2.
    \]
\end{thm:eps}
\begin{proof} %
	Assume that $\epsilon  \ln(8/\delta) / \Delta < c_1^2 c_2^{-1}  8 $, and set $k = c_{1}^2  8^2 \Delta^2 \epsilon^{-2} \ln(8/\delta)$, and  we get:
	\[
	t = \epsilon \norm{w}_4^2 \norm{x}_2^2  / \Delta < c_1^{\nicefrac{4}{3}}c_2^{-\nicefrac{1}{3}} 8\norm{w}_4^2 \norm{x}_2^2 k^{-\nicefrac{1}{3}},
	\]
	We are thus in the Gaussian tail case of the bound described by Lemma~\ref{lem:boundhp}. 
	Then we have 
\[ %
	\begin{split}
	&\pr{|\rho(g(x), w) - \norm{x}_w^2| > t} \leq 8 \exp\left( - \frac{ k t^2 }{( c_1 8 \norm{w}_4^2 \norm{x}_2^2 )^2}  \right) 
	\leq 8 \exp\left( -c_{1}^2  8^2 \epsilon^{-2} \ln\left( \frac{8}{\delta} \right) \frac{\epsilon^2}{c_{1}^2  8^2 } \right) = \delta.
\end{split}
	\]

	Let now assume $\epsilon  \ln(8/\delta) / \Delta  \geq c_1^2 c_2^{-1}  8 $, and fix $k = c_{2}  8 \Delta\frac{ \ln(8/\delta)^2}{\epsilon}$. 
	In this case, 
	\[
	t = \epsilon \norm{x}_2^2 \norm{w}_4^2  / \Delta \geq  c_1^{\nicefrac{4}{3}}c_2^{-\nicefrac{1}{3}} 8\norm{w}_4^2 \norm{x}_2^2 k^{-\nicefrac{1}{3}},
	\]
	and we are in the long tail part of the bound described by Lemma~\ref{lem:boundhp}. 
	Then,
    \[ %
	\begin{split}
	&\pr{|\rho(g(x), w) - \norm{x}_w^2| > t} \leq 8 \exp\left( - \frac{ \sqrt{k t} }{\sqrt{c_2 8 \norm{w}_4^2 \norm{x}_2^2 }}  \right) 
	 \leq 8 \exp\left( -c_{1}^2  8^2 \epsilon^{-2} \ln\left( \frac{8}{\delta} \right) \frac{\epsilon^2}{c_{1}^2  8^2 } \right) = \delta.
	\end{split}
	\]
    As the failure probability decreases by $k$, taking 
	$k \geq  \max\left\{ c_{1}^2  8^2 \frac{ \Delta^2\ln(8/\delta)}{\epsilon^2}, \ c_{2}  8  \frac{ \Delta \ln(8/\delta)^2}{\epsilon} \right\}$ completes the proof. 
\end{proof}

\subsection{Section 4.4}
We have that $\bar\rho(\bar g(x), w) = \sum_{\ell=1}^L \rho( g^{(\ell)}(x^{(\ell)}), w^{(\ell)})$ decomposes as 
\[ 
\begin{split}
    \bar\rho(\bar g(x), w) &= 
     \sum_{\ell=1}^L \rho( g^{(\ell)}(x^{(\ell)}), w^{(\ell)}) \\
    &= \sum_{\ell=1}^L \sum_{i=1}^{k} \!\!\!\sum_{\ \ j_1, j_2, j_3 \in I_{3, d_\ell}}\!\!\!\!\!\!\! \re{A^{(\ell)}_{i,j_1}A^{(\ell)}_{i,j_2}(A^{(\ell)}_{i,j_3})^2} \frac{x^{(\ell)}_{j_1}x^{(\ell)}_{j_2} (w_{j_3}^{(\ell)})^2}{k}  \\
    &= \sum_{\ell=1}^L \sum_{i=1}^{k}\sum_{j_1, j_2, j_3 \in I'_{3,d}}\!\!\!\! \re{A^{(\ell)}_{i,j_1}A^{(\ell)}_{i,j_2}(A^{(\ell)}_{i,j_3})^2} x^{(\ell)}_{j_1}x^{(\ell)}_{j_2}(w^{(\ell)}_{j_3})^2 / k \\
    & \quad + 2 \sum_{\ell=1}^L \sum_{i=1}^{k}\sum_{j_1, j_2 \in I'_{2,d}} \! \re{A^{(\ell)}_{i,j_1} (A^{(\ell)}_{i,j_2})^3} x^{(\ell)}_{j_1} x^{(\ell)}_{j_2}(w^{(\ell)}_{j_2})^2 / k \\
    & \quad + \sum_{\ell=1}^L \sum_{i=1}^{k}\sum_{j_1, j_2 \in I'_{2,d}} \! \re{(A^{(\ell)}_{i,j_1})^2 (A^{(\ell)}_{i,j_2})^2} (x^{(\ell)}_{j_1})^2 (w^{(\ell)}_{j_2})^2 / k \\
    & \quad + \sum_{\ell=1}^L \sum_{i=1}^{k} \sum_{j=1}^d w_j^2 x_j^2 \\
    & = \sum_{\ell=1}^L \sum_{i=1}^{k} H_{1, \ell, i} + \sum_{\ell=1}^L \sum_{i=1}^{k} H_{2, \ell, i} + \sum_{\ell=1}^L \sum_{i=1}^{k} H_{3, \ell, i} +\n{x}^2_w\\
    &= H_1 + H_2 + H_3 + \n{x}^2_w.
\end{split}
\]

For ease of exposition, we only deal with the decomposition of $H_3$, as the discussion for $H_1$ and $H_2$ is similar.
We first bound some quantities related to the Orliz norms of the Rademacher chaoses $H_{3, \ell, i}$. 

\begin{lemma}
    Let $b = ( \n{H_{3, \ell, i}}_{\psi_\alpha} \ : \ell = 1, \dots, L; \ i = 1, \dots, k)$. We have that, for some constant $\overline{c}$,
    \[
    \n{b}_2 \leq \overline{c} \n{w}_4^2 \n{x}_2^2/ (\sqrt{k} L)
    \]
    and that
    \[
    \n{b}_\infty \leq \overline{c} \n{w}_4^2 \n{x}_2^2/ (k L^{\nicefrac{3}{2}}).
    \]
\end{lemma}
\begin{proof}
    We have that 
    \[
    \begin{split}
    \n{b}^2_2 &= \sum_{\ell=1}^L\sum_{i=1}^{k} \n{H_{3, \ell, j}}_{\psi_\alpha}^2 \leq 24e^2(e+1)^{2/\alpha}\sum_{\ell=1}^L\sum_{i=1}^{k} W(H_{3, \ell, j})^2 \\
    & \leq 24e^2(e+1)^{2/\alpha} \zeta \sum_{\ell=1}^L\sum_{i=1}^{k} \n{x_\ell}_2^4 \n{w_\ell}_4^4 k^{-2} \leq \overline{c} \frac{\n{w}_4^4 \n{x}_2^4}{k^2 L^3} k L 
    \end{split}
    \]
    Where we used Lemma \ref{lem:orlicz-rade} with the fact that $x$ and $w$ are near-uniform.
    
    Similarly,
    \[
    \n{b}^2_\infty = \max_{\ell=1, \dots, L} \max_{i=1, \dots, k} \n{H_{3, \ell, j}}_{\psi_\alpha}^2 \leq \overline{c} \max_{\ell=1, \dots, L} \max_{i=1, \dots, k} \n{x_\ell}_2^4 \n{w_\ell}_4^4 k^{-2} \leq \overline{c}  \frac{\n{w}_4^4 \n{x}_2^4}{k^2 L^3}.
    \]
\end{proof}

We then have the following result.
\begin{lemma}\label{lem:H_1extrabound}
    We have $\pr{|H_3| \geq \norm{w}_4^2 \norm{x}_2^2  \bar F(t)} \leq 2e^{-t}$ $\forall t>0$, with $\bar F(t) = \frac{c_1}{\sqrt{k} L } \sqrt{t} + \frac{c_2}{k L^{\nicefrac{3}{2}}} t^2$ and $c_1$ and $c_2$ appropriate constants.
\end{lemma}
\begin{proof}
    Recall that $H_1 = \sum_{\ell=1}^L\sum_{i=1}^{k} H_{3, \ell, i}$, with each summand a mean-zero Rademacher chaos of order at most $4$. We then have \cite{Kuchibhotla2022} for suitable constants $\overline{c_1}$ and $\overline{c_2}$ that 
    \[
    \pr{|H_3| \geq \overline{c_1} \n{b}_2 \sqrt{t} + \overline{c_2} \n{b}_\infty t^2 } \leq 2e^{-t}.
    \]
    Plugging in the result of the previous lemma, we have:
    \[
    2e^{-t} \geq \pr{|H_3| \geq \overline{c_1} \n{b}_2 \sqrt{t} + \overline{c_2} \n{b}_\infty t^2 } \geq 
    \pr{|H_3| \geq \overline{c_1} \frac{\overline{c}\n{w}_4^2 \n{x}_2^2} {\sqrt{k} L} \sqrt{t} + \overline{c_2} \frac{\overline{c} \n{w}_4^2 \n{x}_2^2}{k L^{\nicefrac{3}{2}}} t^2 }
    \]

    Choosing $c_1 = \overline{c_1c}$ and $c_2 = \overline{c_2c}$ gets the claim.
\end{proof}

Similar results hold for $H_1$ and $H_2$.
We then have the following. 

\begin{lemma} \label{lem:H_extra}
	We have that $\pr{|H| \geq 4 \norm{w}_4^2 \norm{x}_2^2 \bar F(t)} \leq 8e^{-t}$. %
\end{lemma}
\begin{proof}
    We use union bound over $H_1$, $H_2$, and $H_3$, similarly to what is done in Lemma~\ref{lem:H}.
\end{proof}

We can then prove the lemma stated in the main paper.
\newtheorem*{lem:boundhp_extra}{Lemma~\ref{lem:boundhp_extra}}
\begin{lem:boundhp_extra}
    For any given $x, w \in \mathbb{R}^d$, we have
    \[ 
	\begin{split}
	&\pr{|\bar\rho(\bar g(x), w) - \norm{x}_w^2| > t} 
	 \leq \max \begin{cases}
	8 \exp\left( - k c_1 \frac{L^2}{\left(\norm{w}_4^2 \norm{x}_2^2 \right)^2} t^2 \right) &  \\
	8 \exp\left( - \sqrt{k} c_2 \frac{L^{\nicefrac{3}{4}}}{(\norm{w}_4^2 \norm{x}_2^2)^{\nicefrac{1}{2}}} \sqrt{t} \right) 
	\end{cases}
	\end{split}
	\]
    with $c_1$ and $c_2$ appropriate constants (different from the ones in Lemma \ref{lem:H_1extrabound}).
\end{lem:boundhp_extra}
\begin{proof}
    We have 
    \[
    P = \pr{|\bar\rho(\bar g(x), w) - \norm{x}_w^2| > t} =  \pr{|H| > t}.
    \]
    From Lemma~\ref{lem:H_extra}, we have that 
    \[
    \pr{|H| > 4c_1 \frac{\n{w}_4^2 \n{x}_2^2}{L \sqrt{k}} \sqrt{v} + 4c_2 \frac{\n{w}_4^2 \n{x}_2^2}{L^{\nicefrac{3}{2}} k } v^2 } \leq 8 e^{-v}.
    \]
    Then, for $v$ such that the term with $\sqrt{v}$ dominates, we have that $P \leq \pr{|H| >  8c_1 \frac{\norm{w}_4^2 \norm{x}_2^2}{L \sqrt{k}} \sqrt{v}} \leq 8e^{-v}$.
    
    Otherwise, we have $P \leq \pr{|H| >  8c_2 \frac{\norm{w}_4^2 \norm{x}_2^2}{L^{\nicefrac{3}{2}} k} v^2} \leq 8e^{-v}$.

    Then, setting $v$ accordingly yields the claim.
\end{proof}

\newtheorem*{thm:eps_extra}{Lemma~\ref{thm:eps_extra}}
\begin{thm:eps_extra} 
    Let $\epsilon, \delta > 0$. Let $c$ be a suitable universal constant. Let $L$ and $k$ be positive integers such that
	$
	k \geq c \max\left\{ \frac{\Delta^2 \ln(8/\delta)}{L^2 \epsilon^2}, \   \frac{ \Delta \ln(8/\delta)^2 }{ L^{\nicefrac{3}{2}} \epsilon} \right\}.
	$
    Then there exists a \emph{linear} function $g(x): \mathbb{R}^d\rightarrow \mathbb{C}^{k\cdot L}$ and an estimator $\rho(g(x), w): \mathbb{C}^{k\cdot L} \times \mathbb{R}^d \rightarrow \mathbb{R}$ such that for any given $x, w \in \mathbb{R}^d$, with probability at least $1-\delta$,
    \[
    \big| \rho(g(x), w) -  \norm{x}_w^2 \big| < \epsilon \norm{x}_2^2 \norm{w}_4^2 / \Delta.
    \]
     In particular, if $\norm{x}^2_2\norm{w}^2_4/\norm{x}^2_w \leq \Delta$, we get, with probability at least $1-\delta$,
     \[
    (1-\epsilon)\norm{x}_w^2 < \rho(g(x), w) < (1+\epsilon)\norm{x}_w^2.
    \]
\end{thm:eps_extra}
\begin{proof}
    Let $t = \epsilon \norm{w}_4^2 \norm{x}_2^2  / \Delta$. 
    
	Suppose we are in the Gaussian tail case of the bound. 
	Then we have, taking $k = c_1^{-1}\Delta^2 \epsilon^{-2} \ln(8/\delta) L^{-2}$,
\[ %
	\begin{split}
	&\pr{|\rho(g(x), w) - \norm{x}_w^2| > t} \leq 8 \exp\left( - k c_1 \frac{ t^2 \cdot L^2 }{( \norm{w}_4^2 \norm{x}_2^2 )^2}  \right) = \delta.
\end{split}
	\]

	Let now assume we are in the long tail part of the bound. 
	Then, taking $k = c_2^{-2} \Delta \epsilon^{-1} \ln(8/\delta)^2 L^{-\nicefrac{3}{2}}$,
    \[ %
	\begin{split}
	&\pr{|\rho(g(x), w) - \norm{x}_w^2| > t} \leq 8 \exp\left( - c_2 \sqrt{k} \frac{ \sqrt{t} L^{\nicefrac{3}{4}} }{ (\norm{w}_4^2 \norm{x}_2^2 )^{\nicefrac{1}{2}}}  \right) = \delta.
	\end{split}
	\]
    
    As the failure probability decreases by $k$, by taking 
	$k \geq  c \max\left\{ \frac{ \Delta^2 \ln(8/\delta)}{L^2 \epsilon^2}, \  \frac{ \Delta \ln(8/\delta)^2}{L^{\nicefrac{3}{2}} \epsilon} \right\}$, with an appropriate constant $c$, completes the proof.         
\end{proof}

\section{Experiments} \label{sec:app:exp}

We perform here a proof-of-concept experimental evaluation of our techniques.
We generate a random vector $x \in \mathbb{R}^d$ by selecting uniformly at random each entry in $[-1, 1]$ and then normalizing the resulting vector such that $\n{x}_2 = 1$. The random weight vector $w \in \mathbb{R}^d$ is also created by selecting uniformly at random each entry in $[0.1, 1.2]$. We set $d = 2 \cdot 10^4$. Finally, we generate $1000$ random matrices $A$ and plot the empirical distribution of the estimator $\rho(g(x), w)$.

From the experiments, we see a gap in the quality of the estimate between the linear (Figure~\ref{fig:block1}) and nonlinear (Figure~\ref{fig:nonlinear}) method by \citet{Kaban}, as argued in Section~\ref{sec:discussion1}. 
Moreover, the experimental results in Figure~\ref{fig:sparse} showcase the applicability of the method on sparse vectors, as discussed in Section~\ref{sec:discussion1}, as well as the poor quality of the estimate for near-uniform vectors for the method described in Theorem~\ref{thm:tail1}, suggesting that the probabilistic analysis of the method is tight.
Finally, the experiments in Figures~\ref{fig:block100} and \ref{fig:blockmax} highlight the significant reduction of the variance of the estimator when using the sparse block matrices. 
Interestingly, the sparse map with the decomposition into $L$ sub-vectors almost closes the gap with the non-linear method, as argued in Section~\ref{sec:discussion2}.

Our code is publicly available at \href{https://github.com/aidaLabDEI/complex-dimensionality-reduction}{https://github.com/aidaLabDEI/complex-dimensionality-reduction}.

\begin{figure}[h!]
    \centering
    \includegraphics[width=0.8\linewidth]{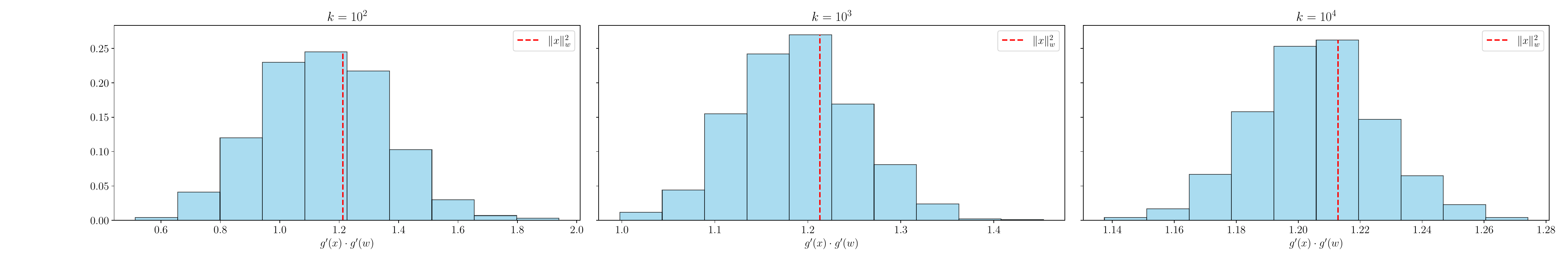}
    \caption{Empirical distribution of the \textit{nonlinear} estimator $g'(x)\cdot g'(w)$ based on JL described in Equation~1, over $N=10^3$ random matrices $A$. The vectors $x$ and $w$ are fixed throughout the experiment, and drawn from a uniform distribution on $[-1,1]^d$ and on $[1.0, 1.2]^d$, respectively. The dimensionality of the vectors is $d = 2 \cdot 10^4$. We report the results for $k \in \{ 10^2, 10^3, 10^4 \}$. We remark that, due to its nonlinearity,  this method does not allow to compute pairwise distances. }
    \label{fig:nonlinear}
\end{figure}

\begin{figure}
    \centering
    \includegraphics[width=0.8\linewidth]{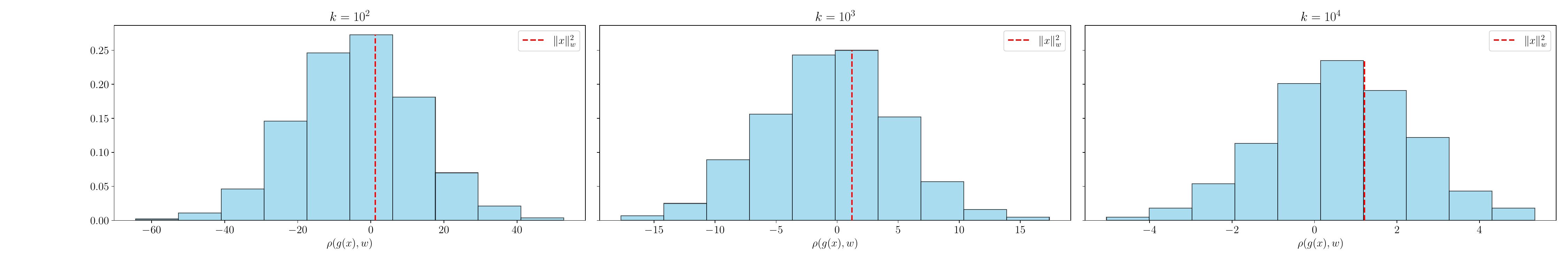}
    \caption{Empirical distribution of the \textit{linear} estimator $\rho(x, w)$ described in Theorem~1.1, over $N=10^3$ complex random matrices $A$. The vectors $x$ and $w$ are fixed throughout the experiment, and drawn from a uniform distribution on $[-1,1]^d$ and on $[1.0, 1.2]^d$, respectively. The dimensionality of the vectors is $d = 2 \cdot 10^4$. We report the results for $k \in \{ 10^2, 10^3, 10^4 \}$. We note that the choice of $x$ and $w$ is the worst-case scenario discussed in Section 4.3, providing experimental evidence for the tightness of the analysis. }
    \label{fig:block1}
\end{figure}

\begin{figure}
    \centering
    \includegraphics[width=0.8\linewidth]{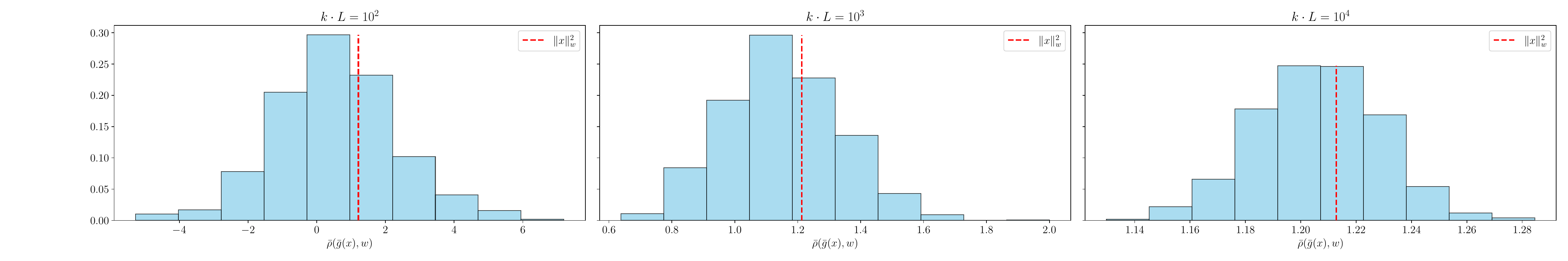}
    \caption{Empirical distribution of the \textit{linear} estimator $\bar\rho(x, w)$ described in Theorem~4.6, over $N=10^3$ complex random matrices $A$. The vectors $x$ and $w$ are fixed throughout the experiment, and drawn from a uniform distribution on $[-1,1]^d$ and on $[1.0, 1.2]^d$, respectively. The dimensionality of the vectors is $d = 2 \cdot 10^4$. We report the results for $L \in \{ 10^2, 10^3, 10^4 \}$ and $k = 1$, which is the optimal parameter choice given the almost-uniformity of the vectors $x$ and $w$. The results showcase the reduction of the estimator variance when using the vector partitioning technique, thereby closing the gap with the nonlinear method. }
    \label{fig:blockmax}
\end{figure}

\begin{figure}
    \centering
    \includegraphics[width=0.8\linewidth]{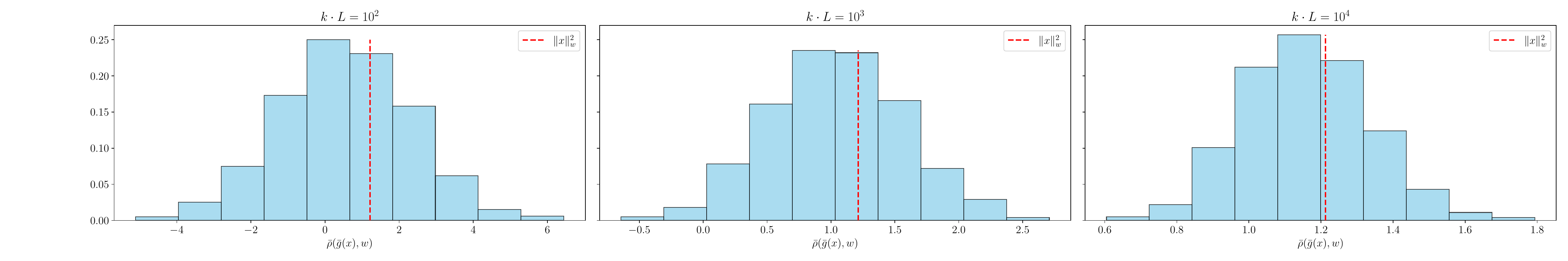}
    \caption{Empirical distribution of the \textit{linear} estimator $\bar\rho(x, w)$ described in Theorem~4.6, over $N=10^3$ complex random matrices $A$. The vectors $x$ and $w$ are fixed throughout the experiment, and drawn from a uniform distribution on $[-1,1]^d$ and on $[1.0, 1.2]^d$, respectively. The dimensionality of the vectors is $d = 2 \cdot 10^4$. We report the results for $L = 100$ and $k\cdot L \in \{ 10^2, 10^3, 10^4 \}$, which ensures the almost-uniformity condition for a wider range of vectors $x$ and $w$.}
    \label{fig:block100}
\end{figure}

\begin{figure}
    \centering
    \includegraphics[width=0.8\linewidth]{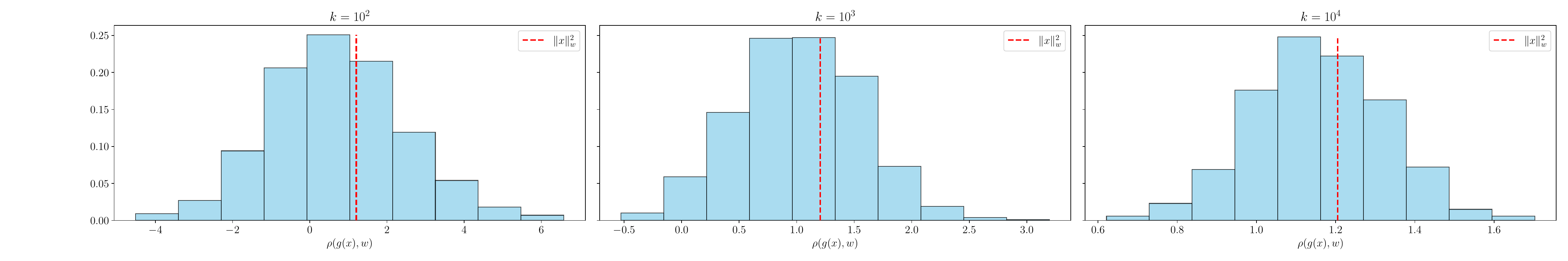}
    \caption{Empirical distribution of the \textit{linear} estimator $\rho(x, w)$ described in Theorem~1.1, over $N=10^3$ complex random matrices $A$. The vector $x$ is sparse, with $m = 10^2$ entries drawn from a uniform distribution on $[-1,1]^d$ and the rest set to 0. Similarly, $w$ is sparse, with the same entries drawn from a uniform distribution on $[1.0, 1.2]^d$. The dimensionality is $d = 2 \cdot 10^4$. 
    We report the results for $k \in \{ 10^2, 10^3, 10^4 \}$. We note that the concentration depends on $m$, rather than on $d$, as argued in Section 4.3. }
    \label{fig:sparse}
\end{figure}

\end{document}